\newcommand*{\indep}{%
	\mathbin{%
		\mathpalette{\@indep}{}%
	}%
}
\newcommand*{\nindep}{%
	\mathbin{% % The final symbol is a binary math operator
		\mathpalette{\@indep}{\not}% \mathpalette helps for the adaptation
		% of the symbol to the different math styles.
	}%
}
\newcommand*{\@indep}[2]{%
	% #1: math style
	% #2: empty or \not
	\sbox0{$#1\perp\m@th$}% box 0 contains \perp symbol
	\sbox2{$#1=$}% box 2 for the height of =
	\sbox4{$#1\vcenter{}$}% box 4 for the height of the math axis
	\rlap{\copy0}% first \perp
	\dimen@=\dimexpr\ht2-\ht4-.2pt\relax
	% The equals symbol is centered around the math axis.
	% The following equations are used to calculate the
	% right shift of the second \perp:
	% [1] ht(equals) - ht(math_axis) = line_width + 0.5 gap
	% [2] right_shift(second_perp) = line_width + gap
	% The line width is approximated by the default line width of 0.4pt
	\kern\dimen@
	{#2}%
	% {\not} in case of \nindep;
	% the braces convert the relational symbol \not to an ordinary
	% math object without additional horizontal spacing.
	\kern\dimen@
	\copy0 % second \perp
}
\newtheorem{theorem}{Theorem}
\newtheorem{lemma}{Lemma}
\newtheorem{proposition}{Proposition}
\newtheorem{assumption}{Assumption}
\def\pr{\textnormal{pr}}
\def\AT{{ss}}
\def\CO{{s\bar{s}}}
\def\NT{{\bar{s}\bar{s}}}
\def\DE{{\bar{s}s}}
\def\pr{\textnormal{pr}}
\newcommand{\mathleft}{\@fleqntrue\@mathmargin0pt}
\newcommand{\mathcenter}{\@fleqnfalse}
\newtheorem{model}{Model}
\def\T{{ \mathrm{\scriptscriptstyle T} }}
\begin{document}

	\def\spacingset#1{\renewcommand{\baselinestretch}%
		{#1}\small\normalsize} \spacingset{1}

	%%%%%%%%%%%%%%%%%%%%%%%%%%%%%%%%%%%%%%%%%%%%%%%%%%%%%%%%%%%%%%%%%%%%%%%%%%%%%%
	
	%\if1\blind
	%{\title{}
		\title{\bf Identification and estimation of causal effects in the presence of confounded principal strata}
		\author{
Shanshan Luo\footnotemark[1], \;
Wei Li\footnotemark[2],\; %对应同一个脚注符号
 Wang Miao\footnotemark[3] \;%\Envelope
and Yangbo He\footnotemark[4] %\Envelope
}
\maketitle 
 \footnotetext[1]{School of Mathematical Sciences, Peking University. Email: \href{mailto:shan3\_luo@pku.edu.cn}{shan3\_luo@pku.edu.cn }} %对应脚注[1]
\footnotetext[2]{Center for Applied Statistics and School of Statistics,  Renmin University of China. Email: \href{mailto:weilistat@ruc.edu.cn}{weilistat@ruc.edu.cn}} 
 \footnotetext[3]{School of Mathematical Sciences, Peking University. Email: \href{mailto:mwfy@pku.edu.cn}{mwfy@pku.edu.cn}}  
 \footnotetext[4]{School of Mathematical Sciences, Peking University. Email: \href{mailto:heyb@math.pku.edu.cn}{heyb@math.pku.edu.cn}} 
 		 %		\author{Wei Li\thanks{Center for Applied Statistics and School of Statistics, Renmin University of China, Beijing 100872, China; email: \texttt{weilistat@ruc.edu.cn}.},~~Shanshan Luo\thanks{School of Mathematical Sciences, Peking University, Beijing 100871, China; email: \texttt{shan3\_luo@pku.edu.cn}.},~~and Wangli Xu\thanks{Corresponding author; Center for Applied Statistics and School of Statistics, Renmin University of China, Beijing 100872, China; email: \texttt{wlxu@ruc.edu.cn}.}}
		\date{}
% 		 \author{Wei Li
% 			%  	\thanks{
% 				%    The authors gratefully acknowledge \textit{please remember to list all relevant funding sources in the unblinded version}}
% 				\hspace{.2cm}
% 			\\
% 			    Center for Applied Statistics and School of Statistics, Renmin University of China\\
% 			    and \\
% 			    Author 2 \\
% 			    Department of ZZZ, University of WWW}
		\maketitle
		%} \fi
	
	%\if0\blind
	%{
		%  \bigskip
		%  \bigskip
		%  \bigskip
		%  \begin{center}
			%    {\LARGE\bf Title}
			%\end{center}
			%  \medskip
			%} \fi
		 
\begin{abstract}
 The principal stratification  has become a popular tool to address a broad class of causal inference questions, particularly in dealing with non-compliance and truncation-by-death problems.  The causal effects within principal strata which are determined by joint potential values of the intermediate variable, also known as the principal causal effects, are often of interest in these studies. Analyses of principal causal effects from observed data in the literature mostly rely on    ignorability of the treatment assignment, which requires practitioners to accurately measure as many as covariates so that all possible confounding sources are captured. However,  collecting all potential confounders in observational studies is often difficult and costly,   the ignorability assumption may thus be questionable. In this paper, by leveraging available negative controls that have been increasingly used to deal with uncontrolled  confounding, we consider identification and estimation of causal effects when the treatment and principal strata are confounded by unobserved variables. Specifically,  we show that the principal causal effects can be nonparametrically identified by invoking a pair of negative controls that are both required not to directly affect the outcome. We then relax this assumption and establish identification of principal causal effects under various semiparametric or parametric models. We also  propose an estimation method of principal causal effects.
Extensive simulation studies show good performance of the proposed approach and a real data application from the  National  Longitudinal  Survey of  Young  Men is used for illustration.
% by use of negative controls as the twofold role: (i) the proportions of principal strata can be recovered by integrating the confounding bridge with  negative controls (ii) principal causal effects can be  identified  by imposing some independence condition on negative controls. 
% When the identification assumption is invalid, we also consider an alternative set of  conditions for  identifying  principal causal effect in a semiparametric or parametric manner. We then propose  model parameterizations for estimation of  principal causal effects within the  identification assumptions. Our methods are illustrated via simulation studies and real data application. 
\end{abstract}
 
{\bf Keywords}: Causal Inference;  Negative Control; Non-compliance; Principal Stratification; Unmeasured Confounding. 
\vspace{1.5cm}\section{Introduction} 
%{In observational studies, researchers are often interested in understanding the underlying causal mechanism from the treatment to the outcome in the presence of intermediate variable. However,    even in randomized experiments, it is not feasible to  randomize the intermediate variable. Therefore, naive analysis by conditioning on the observed value of the intermediate variable is not causally interpretable. \citet{Frangakis:2002} propose to estimate causal effects within principal strata, which are defined by the joint potential values of the intermediate variable under both treatment and control. These problems with intermediate variables concern average causal effects within principal strata, which are also referred to as  principal causal effects ({principal causal effects}). Applications of {principal causal effects} have appeared in many  practical issues, such as,  non-compliance, truncation by death, missing data, mediation, and surrogate evaluation.}

% Investigators are often interested in analyzing the underlying causal mechanism 
Many scientific problems are concerned with evaluating the causal effect of a treatment on an outcome in the presence of an intermediate variable.
% Simple analysis by conditioning on the observed value of the intermediate  variable is not causally interpre, even after adjusting for a possibly large number of observed pretreatment covariates.   
{Direct comparisons} conditional on observed values of the intermediate variable are usually not causally interpretable.
\citet{Frangakis:2002}  {propose} the principal stratification   framework {and}  define principal {causal} effects that can effectively  compare {different treatment assignments} in such settings. The principal stratification is defined by
 joint potential values of the intermediate variable under each treatment level being compared, which is not affected by treatment assignment, and hence it can be viewed as a pretreatment covariate to
 classify individuals into subpopulations. Principal causal effects 
%  \wl{Is the acronym PCE  allowed by the styleguide?}
 that are defined as {potential outcome contrasts} within principal strata thus exhibit clear scientific {interpretations} in many  practical {studies} \citep{VanderWeele2011Principal}.  For instance, in non-compliance problems, the intermediate variable  is the actual treatment received, the principal stratification represents the compliance status, and the treatment assignment {plays the role of} an instrumental variable 
in identifying the complier average causal effect  \citep{Angrist:1996}. In truncation-by-death problems, the intermediate variable denotes survival status, and a meaningful parameter termed survivor average causal effect is defined as the effect among the subgroup who would survive under both treatment levels \citep{rubin2006causal,Zhang:2009,Ding:2011}.

Analysis of {principal causal effects} from observed data is challenging, because principal stratification is often viewed as an unobserved confounder between the intermediate and outcome variables.
%{\lss Principal stratification and treatment assignment can jointly  determine the value of the intermediate variable. Therefore the principal stratification variable can be regarded as the only unobserved  confounding between the intermediate and outcome variables. A}nalysis of {principal causal effects} from observed data is challenging {\lss since principal strata are inherently latent}. 
Most {works} in the literature rely on the {ignorability of treatment assignment}, which states that the distributions of potential values of intermediate and outcome variables do not vary across the treatment assignment given observed covariates. This assumption essentially 
requires that observed covariates account for all possible confounding factors between the treatment and post-treatment variables. Since the principal causal effects are defined on the latent principal {strata}, 
 one can only establish large sample bounds or conduct sensitivity analysis for  {principal causal effects} under the ignorability assumption 
\citep{Zhang:2003,Lee:2009,Long2013Sharpening}, but fails to
obtain identifiability results without additional assumptions. {Previous literature has used} an auxiliary variable that satisfies some conditional independence conditions to address the identification issues
\citep{Ding:2011,Jiang2016,DingPrincipal,Wang2017iv,Luo-multiarm-2021}.
% requires that a sufficiently rich set of covariates be measured such that given these pretreatment variables,}   the distributions of the potential values of intermediate variable and  outcome variable  do not vary across the observed treatment assignment. 
% Given all adequate confounders in observational studies, or even in randomized trials, {principal causal effects} cannot be identified without additional assumptions.  
% Large sample bounds or sensitivity analysis for  {principal causal effects} have been proposed  within   minimal conditions  \citep{Zhang:2003,Lee:2009, Long2013Sharpening,Luo-multiarm-2021}. %  In order to identify principal causal  effect, it is common to perform a sensitivity analysis by assuming a class of identification conditions indexed by a sensitivity parameter. 
% Under ignorability assumption, identification of {principal causal effects} is sometimes achievable with an  auxiliary covairiate  satisfying some conditional independence assumptions \citep{Ding:2011,Jiang2016,DingPrincipal,Wang2017iv}.}   
%All these results rely on the ignorability assumption.
{However, as can happen in observational studies, one may not sufficiently collect the pretreatment covariates.
%	%as often happens in practice, it is difficult to  detect and control for all possible confounding in observational studies. I
%it often happens  that one cannot sufficiently collect the pretreatment covariates in observational studies, 
 The existence of unmeasured variables may
% if the collected set of covariates is not rich enough, % enough to explain all common causes,  
% there may exist unmeasured confounding in the treatment assignments, 
% which 
render the ignorability assumption invalid and thus {the traditional causal estimates in principal stratification analysis can be biased.}}
%causal {\lss estimates} from traditional principal stratification analysis can be biased.}

%Because the principal stratum is unobservable and can sometimes be seen as a confounding, for example, it can reflect the different  compliance behavior of individuals in the context of non-compliance.}

As far as we know, there has not been much discussion on principal causal effects when the ignorability assumption fails. Several {authors} have considered the setting where the potential values of the intermediate variable are correlated with the treatment assignment even after conditioning on observed covariates. In other words, the treatment and the intermediate variable 
%latent principal strata 
are confounded {by}  unmeasured factors in this setting.
% {\lss For example,  \citet{schwartz2012sensitivity} consider the sensitivity analysis for unmeasured confounding in principal stratification settings when treatment has no direct effect on outcome.} 
\citet{schwartz2012sensitivity} present model-based approaches for assessing the sensitivity of complier average causal effect estimates in non-compliance problems when there exists unmeasured confounding in the treatment arms.
%{\lss For example, in non-compliance problems,  some literature consider the treatment assignment to be correlated with unobserved confounders, or equivalently, in the context of invalid instrumental variable. Specifically, 
%\citet{schwartz2012sensitivity}   consider the sensitivity analysis for the average treatment effects among compliers;}
\citet{KEDAGNI2021} discusses similar problems and
% allows the instrumental variable to be correlated with unobserved confounders and 
{provides} identifiability results
% of the complier average treatment effects 
by using a proxy for the  {confounded treatment assignment} %invalid instrument variable 
under some tail restrictions for the potential outcome distributions. \citet{deng2021identification}  study truncation-by-death problems and establish identification of the conditional average treatment effects for always-survivors given observed covariates by employing an auxiliary variable whose distribution is informative of principal strata. However, {because} the conditional distributions of principal strata given covariates are not identified, the survivor average causal effect is generally not identifiable in their setting.

% , which is fairly common in observational studies.  In such cases,  existing methods or techniques cannot even perform large sample bounds or conduct sensitivity analyses. From a practical point of view, it is necessary to develop approaches to handle this issue. 

% {In this paper, we consider  a setting in which the  the treatment and intermediate  processes are confounded  with unmeasured variables. Before our work, there are also some literature concerns principal causal effects, allowing some unmeasured confoundings to influence treatment assignment and intermediate variables.  
%  For example, in non-compliance problems, \citet{KEDAGNI2021} consider   identifying the local average treatment effects  for the compliers by using a proximal variable for the invalid instrument variable.  
%  In the context of truncation  by death, \citet{deng2021identification} study the identification of  conditional  average treatment  effects for the always-survivors by employing a auxiliary covariate  whose distribution is informative of  principal strata. 
 
{To overcome these limitations}, we establish identification of principal causal effects by leveraging {a pair of} negative control variables. In the absence of intermediate variables, many researchers have employed a negative control {exposure} and a negative control outcome to identify the average causal effects when unobserved confounders exist \citep{Miao2018Identifying,shi2020multiply,miao2020confounding,cui2020semiparametric}. However, 
 {the principal causal effects may be of more  interest in the presence of an intermediate variable.} 
% the average treatment effects may not always be of  interest  in the presence of intermediate variable.  %.
 For instance, in truncation-by-death problems,
 individuals may die before their outcome variables are measured, and hence the outcomes for dead individuals are not well defined. {Then  %{only scientifically meaningful parameter}
 %main parameter of interest 
 %in such studies is
 the {survivor average causal effect} is more scientifically meaningful in these studies \citep{rubin2006causal, tchetgen2014identification}}. While the identification and estimation of average causal effects within the negative control framework have been well studied in {the} literature, it remains uncultivated in studies where an intermediate variable exists and principal causal effects are of interest. 
 
 In this paper, we develop identification and estimation of principal causal effect {in the presence
of unmeasured confounders.}
 %a negative control treatment and negative control intermediate variable
  Specifically,
%  {There is a growing literature on the use of negative controls to mitigate confounding bias in   analysis of observational data.  In the absence of intermediate variables, many researchers have established nonparametric identification of causal effects by employing negative control variables as confounder proxies  \citep{Miao2018Identifying,shi2020multiply,miao2020confounding,cui2020semiparametric,tchetgen2020introduction}. One should not directly apply these existing approaches to 
% attenuate  confounding bias in the presence of intermediate variable, such as   truncation by death problem,   in which   the outcome variables  are not always well-defined. In this paper, we focus on the identification and estimation of causal effects confounded by unmeasured factor in the presence of intermediate variable.
we first introduce a confounding bridge function that links negative controls and the intermediate variable to identify proportions of the principal strata.  We then establish nonparametric identification of principal causal effects by assuming {that} the negative controls have no direct effect on {the} outcome.      We next relax this assumption and show alternative identifiability results based on semiparametric and parametric models. Finally, we provide an estimation {method} and discuss the asymptotic {properties}.  We evaluate the performance of the proposed estimator
{with}  simulation studies and a real data application.

 \section{Notation and assumptions}
 \label{sec: notation-assumption}
Assume that there are $n$ individuals {who are independent and identically sampled from a  superpopulation} of interest.
	Let $Z$ denote a binary treatment assignment with $1$ indicating treatment and $0$ for control. Let $Y$ denote an outcome of interest, and let $S$ denote {a}  binary intermediate variable.  Let $X$ denote a vector of covariates observed at baseline. 
	 We use the potential outcomes framework and make the stable unit treatment value assumption; that is, there is only one version of potential outcomes and there is no interference between units \citep{rubin1990comment}.
	%logarithmic weight difference between the end and beginning of the experiment
	Let $S_z$ and $Y_z$ denote the potential values of the intermediate variable and outcome that would be observed under treatment $Z=z$.  The observed values $S$ and $Y$ are deterministic functions of the treatment assignment and their respective potential values: $S=ZS_1+(1-Z)S_0$ and $Y=ZY_1+(1-Z)Y_0$.  	
	
	\citet{Frangakis:2002}  define the principal stratification as joint potential values of the intermediate variable under both the treatment and control.   We denote the basic principal stratum by $G$ and it can be expressed as $G=(S_{0},S_{1})$.	Since each of the potential values has two levels, there are four different principal strata in total.
	For simplicity, we refer to these principal strata, namely, $\{(0, 0),(0, 1),(1,1),(1, 0)\}$
	as never-takers ($\NT$), compliers ($\CO$), always-takers ($\AT$), and defiers ($\DE$),  respectively. % {\red Abbreviations are not allowed in Biometrika?}
%To address this issue, we consider sensitivity anslysis for this assumption later in Section  \ref{sec: sensitivit  
{The causal estimand of interest is the principal causal effect, i.e.,}
  $$~~~~~~~\Delta_{g}=E(Y_{1}-Y_{0}\mid G= g),\; g\in\{\NT,\CO,\DE,\AT\}. $$
  
  The {principal causal effect} conditional on a latent variable $G$ is  not identifiable without {additional} assumptions. Here we do not impose the  exclusion restriction assumption \citep{Angrist:1996} that requires {no}  individual causal effect on the outcome among the subpopulations $G=\AT$ and $G=\NT$, because in many settings with intermediate variables, such as {truncation-by-death} or surrogate problems \citep{gilbert2008evaluating}, the very scientific question of interest is to test whether the principal causal effect $\Delta_{\AT}$ or $\Delta_{\NT}$ is zero. {Under} this setup, the identification of $\Delta_{g}$ in the literature often relies on the following monotonicity assumption.

\begin{assumption}[Monotonicity]
\label{assumption: monotonicity}
  $S_1\geq S_0 $.
%   , i.e.,  $G\in\{ {\AT,\CO,\NT}\}$.
 \end{assumption}  
 %The monotonicity assumption may be plausible in some observational studies. {In section \ref{sec: sensitivity-analysis}, }Assumption \ref{assumption: monotonicity} implies that the treatment has a non-negative effect on the intermediate variable for all units, which rules out the stratum $G={\CO}$.  Besides, under assumption \ref{assumption: monotonicity},  we have the equivalence of $\{G={\AT}\}$ and $\{S_0=1\}$ as well as  $\{G={\NT}\}$ and $\{S_1=0\}$.
 
 Monotonicity rules out the existence of the defier group $G=\DE$. This assumption may be plausible in some observational studies. % if the treatment levels are reasonably labelled.%{二值的时候不用 reasonably labelled？就是S_1 \geq S_0或者 S_1 \leq S_0??} %\wl{For example, in our studies...}
 For example, in studies evaluating the effect of educational attainment on future earnings,   a subject living near a college is likely to receive  {{a} higher educational level}. The second commonly-used assumption is the treatment ignorability assumption:  $Z\indep   (S_{0},S_{1},Y_{0},Y_{1})\mid X$. This assumption entails that the baseline covariates $X$ control for all confounding factors between the treatment and post-treatment variables.
 However,  {the ignorability fails in the presence of unmeasured confounding. Let $U$ denote an} unobserved variable, which together with observed covariates $X$, captures all potential confounding sources between the {treatment  $Z$ and   variables $(S,Y)$}. We impose the following latent ignorability assumption. 
 
 \begin{assumption}[Latent ignorability]
\label{assumption: latent ignorability}
(i) $Z \indep (S_{0},S_{1})\mid (U, X )$; (ii)  $ Z\indep (Y_{0},Y_{1})\mid (G,X)$.
 \end{assumption} 
 
 The type of confounding {considered} in  Assumption~\ref{assumption: latent ignorability}{(i)} is termed $S$-confounding by \cite{schwartz2012sensitivity}.
The presence of the unmeasured variable $U$ in this assumption
%~\ref{assumption: latent ignorability}{(i)} has widely {broadened} the settings considered in most principal stratification literature, which, on the other hand, also 
brings about dramatic methodological changes and important technical challenges to principal stratification analysis. For example, when the traditional ignorability assumption {holds}, the  inequality $\pr(S=1\mid Z=1,X)< \pr(S=1\mid Z=0,X)$ can be used to falsify the monotonicity assumption. However, if  $U$ exists, 
it is no longer possible to empirically test monotonicity using this inequality.
%{\lss this inequality would no longer be used to empirically test for monotonicity.}
%it is {\lss no longer possible} to empirically {test} %falsify monotonicity based on this inequality.
In addition, {if we define principal score $\pi_{g}(X)$ as the proportion of the principal stratum given observed covariates \citep{DingPrincipal}, namely, $\pi_{g}(X)=\pr(G=g\mid X)$, the}
%if we denote the principal score which is defined as the proportion of principal stratum given   observed covariates \citep{jo2009use,DingPrincipal} by $\pi_g(X)$, i.e., $\pi_{g}(X)=\pr(G=g\mid X)$, the
presence of $U$ impedes identification of $\pi_g(X)$.  Assumption~\ref{assumption: latent ignorability}(ii) means that the confounding factors between the treatment and the outcome are fully characterized by the latent principal stratification $G$ and observed covariates $X$ {\citep{Wang2017iv}}.  %{\blue One may further relax this assumption by including an unobserved variable $U$, %but undoubtly, {as expected,} 
%the identification and estimation will become more challenged.   
Assumption~\ref{assumption: latent ignorability} {has} also {been} considered by 
\citet{KEDAGNI2021} and \citet{deng2021identification}.

We next discuss identification of $\Delta_{g}$ under Assumptions~\ref{assumption: monotonicity} and~\ref{assumption: latent ignorability}.
For simplicity, we define $\mu_{z,g}=E(Y_z \mid G=g)$, and hence $\Delta_g=\mu_{1,g}-\mu_{0,g}$. It {suffices} to identify $\mu_{z,g}$ for 
the identification of $\Delta_g$. Let  $\mu_{z,g}( X)=E(Y \mid Z=z, G=g, X)$. Then
% \begin{equation*}
%  \pi_{g}(X)=\pr(G=g\mid X), \; \;   \mu_{z,g}( X)=E(Y \mid Z=z, G=g, X) , 
% \end{equation*}
% where $\pi_g(X)$ is  known as the   principal score which denotes the proportion of principal strata given all   observed covariates \citep{jo2009use,DingPrincipal}, and $ \mu_{z,g}( X)$ is the   outcome mean conditional on the subpopulation $(Z=z,G=g,X)$. Let $\mu_{z,g}=E(Y_z \mid G=g)$, 
 under Assumption \ref{assumption: latent ignorability}(ii), we have 
\begin{equation}
\label{eq: the-expression-of-PCE}
\mu_{z,g}={E\{\mu_{z,g}( X)\pi_{g}(X)\}}/{E\{\pi_{g}(X)\}}.
\end{equation} 
It can be seen  that  the identification of $\mu_{z,g}$ depends on that of $\pi_{g}(X)$ and $\mu_{z,g}( X)$. Under Assumptions  \ref{assumption: monotonicity} and \ref{assumption: latent ignorability}(i), we have that 
\begin{align*} 
% \begin{array}{l}
\pi_{{\AT}}(X)&=E\left\{p_{0}(X, U) \mid X\right\},\quad\pi_{{\NT}}(X)=1-E\left\{p_{1}(X, U) \mid X\right\},\\
 &~~~~\pi_{{\CO}}(X)=E\left\{p_{1}(X, U)-p_{0}(X, U) \mid X\right\},
% \end{array}
\end{align*}
 where $p_z(X,U)=\pr(S=1\mid Z=z,X,U)$.
 {Because} $U$ is unobserved, the principal scores in the above equations cannot be identified without {additional} {assumptions}. As for the conditional outcome means $\{\mu_{z,g}(X):z=0,1;g=\AT,\NT,\CO\}$, only $\mu_{0,\AT}(X)$ and $\mu_{1,\NT}( X)$ can be identified under Assumptions~\ref{assumption: monotonicity} and~\ref{assumption: latent ignorability}(ii)  by $
\mu_{0,\AT}( X)=E(Y \mid Z=0, S=1, X)$ and $\mu_{1,\NT}( X)=E(Y \mid Z=1, S=0, X)$. However, the identifiability of other conditional outcome means is not guaranteed,  {because} the observed data $(Z=1,S=1,X)$ and $(Z=0,S=0,X)$ are {mixtures of} two principal strata:
 \begin{equation}
\label{eq: the-mixture-expression}
\begin{aligned} 
E(Y \mid Z=1, S=1, X)={{\textstyle\sum}_{g={\AT},{\CO}}\eta_{g}(1,X) \mu_{1,g}(X)},\\ 
E(Y \mid Z=0, S=0, X)={\textstyle\sum}_{g={\CO},{\NT}}\eta_{g}(0,X)  \mu_{0,g}(X),
\end{aligned}
\end{equation} 
where $\eta_{g}(1,X)=\omega_g(1,X) /\{\omega_{\AT}(1,X)+\omega_{\CO}(1,X)\}$, $\eta_{g}(0,X)=\omega_g(0,X) /\{\omega_{\NT}(0,X)+\omega_{\CO}(0,X)\}$, and  $   \omega_{g}(z,X)=\pr(G=g\mid Z=z,X) $. In later sections, the conditional probabilities of principal strata given only a subset $V$ of covarites $X$ may be of interest, and we simply denote them by replacing $X$ with $V$ in the original notations. For example, $\pi_g(V)=\pr(G=g\mid V)$. Other notations, such as $\omega_{g}(z,V)$ and $\eta_g(z,V)$, can be similarly interpreted. 
% {\blue  $   \eta_{g}(z,V)=\omega_g(z,V) /\pr(S=z\mid Z=z,V) $ and  $   \omega_{g}(z,X)=\pr(G=g\mid Z=z,V) $  for a generic variable $V$ with $z=0,1$}. 
%{ $   \eta_{g}(z,X)=\omega_g(z,X) /\pr(S=z\mid Z=z,X) $ and  $   \omega_{g}(z,X)=\pr(G=g\mid Z=z,X) $  for $z=0,1$}
{Due to} the {presence} of unobserved confounders $U$, the weights $   \eta_{g}(z,X)$  {in \eqref{eq: the-mixture-expression}} are no longer identifiable, which complicates the identification  and differs from most of existing results in the literature. 
{In such a case,} %Under the current circumstance, 
the large sample bounds or sensitivity analysis for these conditional outcome means cannot be easily obtained  without further assumptions
% \citep{Zhang:2003,IMAI2008144,Long2013Sharpening}, 
and  it would be even more difficult to obtain their identifiability results. In the following section, we discuss how to establish the identifiability of principal causal effects based on auxiliary variables.

 \section{Identification} 
 \subsection{Nonparametric  identification   using a pair of  negative controls}
 %In this section, we consider a nonparametric identification strategy of {principal causal effects}.
 {%From   previous discussion, we know that identification of  {principal causal effects} depends on the  quantities  $\omega_g(Z,X)$, $\pi_g(X)$ and $\mu_{z,g}(X)$. 
 In this section, we establish a nonparametric identification result for {principal causal effects}  through a pair of negative control variables when the ignorability assumption fails.} %Specifically, our goal is to mitigate confounding bias induced by the unobserved variable $U$ so that the  weights of principal strata are identifiable and further obtain the identifiability of principal causal effects.}
    Motivated from the proximal causal {inference} framework for identifying average treatment effects \citep{Miao2018Identifying,shi2020multiply,miao2020confounding,cui2020semiparametric}, we  assume that the covariates  $X $ can be decomposed into $(A,W,C^{\T})$ such that
  $A$ serves as a negative control {exposure}, $W$ serves as a negative control {intermediate} variable and  $C$
  accounts for the remaining observed confounders. For convenience, we may use the notation		$X$  and  $(A,W,C^{\T})$ interchangeably below.  
  %We condition on covariates $C$ implicitly and omit them below. 
%   Formally, we   introduce the following negative control framework.
%  \citep{Miao2018Identifying,shi2020multiply,miao2020confounding,cui2020semiparametric}.   
   \begin{assumption}[Negative control]\label{assumption: negative-control-variables} 
 $(Z,A)\indep (S_0,S_1,W)\mid (C,U)$ 
 \end{assumption} 
    \begin{assumption}[Confounding bridge]\label{assumption:  bridge-fun}
There exists a function $h(z, W,C) $ such that $ 
\pr(S=1\mid Z=z,C,U)=E\{ h(z, W,C) \mid  C,U \}$
 almost surely for {all} $z$. 
 %$ A \indep S \mid (U, C),W \indep(A, C)\mid U.$
 \end{assumption}

\begin{figure}[t]   
			\centering \includegraphics[width=0.45\linewidth]{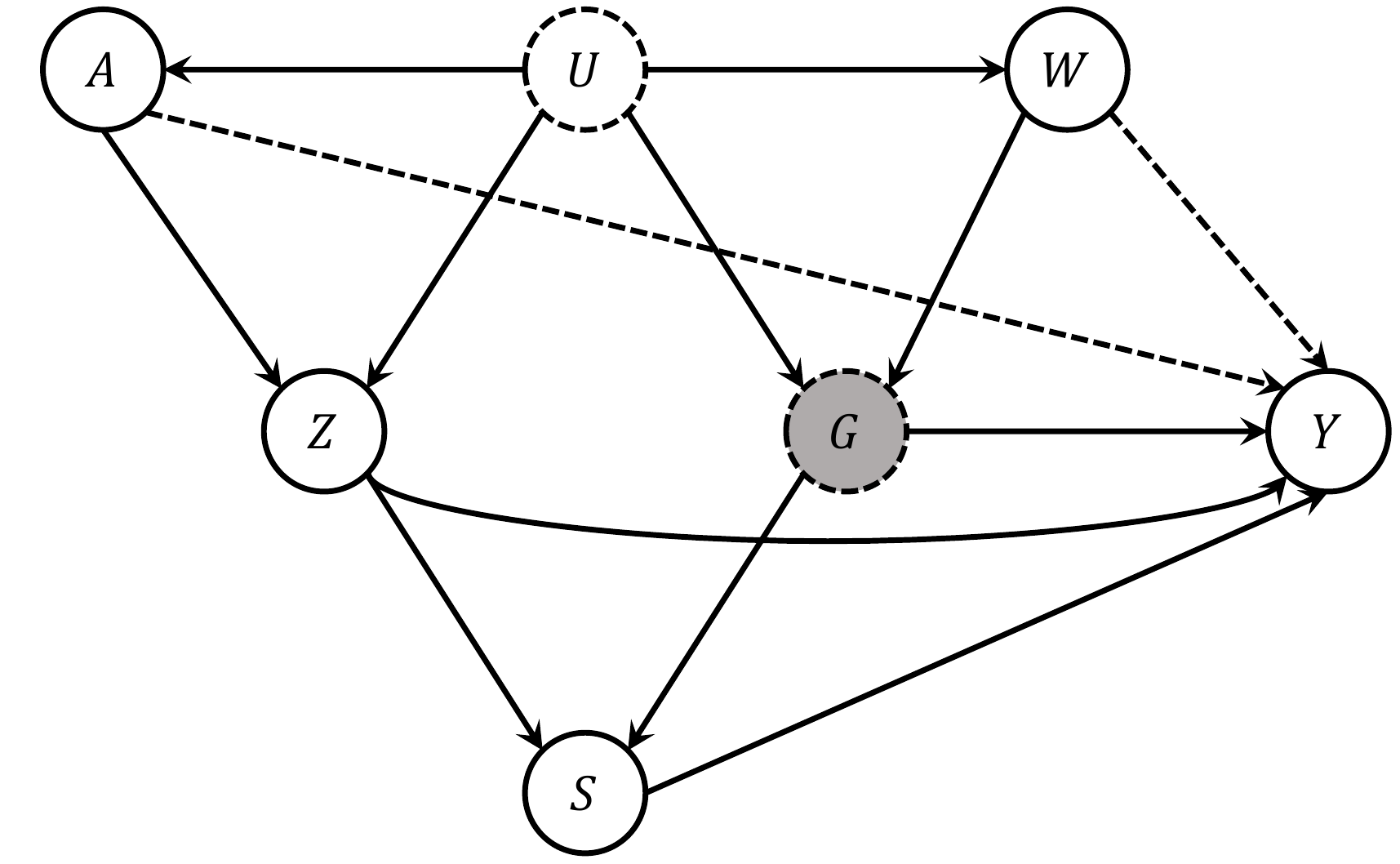} 
	%		\caption{$(Z,A)\indep (S_z,W)\mid (C,U)$}  
    			\caption{A causal diagram  illustrating treatment and intermediate variable confounding proxies when ignorability assumption fails.    Dashed {arrows} indicate edges can exist when {semiparametric or parametric models are considered.}  Observed covariates are omitted for simplicity. }
% 			\wl{Is it necessary to draw Figure 1(a)?}}
		\label{fig: NG-graph}
	\end{figure}

Assumption \ref{assumption: negative-control-variables} implies that the variables  $(C,U)$ are sufficient to  account for  the {confounding} between $(Z,A)$ and $(S_0,S_1,W)$. 
The negative control exposure $A$ {does} not directly affect either the intermediate variable $S$ or the negative control intermediate variable $W$. 
% Monotonicity and 
Assumption \ref{assumption: negative-control-variables} imposes no restrictions on   $Z$-$A$ association or $G$-$W$ association, and    allows the two negative controls  $A$ and $W$ to be confounded by the unmeasured variable $U$.
 %Combined with monotonicity assumption \ref{assumption: monotonicity},  Assumption \ref{assumption: negative-control-variables}  indicates that 
  % $(C,U)$  includes all unmeasured
%common causes of $Z,A, G$ and $W$ except for that of the $Z$-$A$ association and $G$-$W$ association. 
See Fig. \ref{fig: NG-graph}  for a graphic illustration. 
{The confounding bridge function in Assumption \ref{assumption:  bridge-fun} establishes the connection between the negative control  $W$ and the intermediate variable $S$.} Assumption \ref{assumption:  bridge-fun}    defines  an   inverse problem known as {the} Fredholm integral equation of the first kind. 
  The technical conditions for the existence of a solution are provided in \citet{Carrasco2007Fredholm}.
%   , \citet{Miao2018Identifying} and \citet{cui2020semiparametric}. 
Since the principal stratum $G$ is a latent variable, Assumptions~\ref{assumption: negative-control-variables} and \ref{assumption:  bridge-fun}, {which are used to control for unobserved confounding between treatment and intermediate variables,} are not sufficient to nonparametrically identify principal causal effects.
  We thus impose  the following conditional independence condition between the negative controls and  potential outcomes given the latent variable $G$ and observed covariates $C$.
%   , as given in Assumption \ref{assump: negative-control-indep}.
 \begin{assumption}
  \label{assump: negative-control-indep}
  $(Z,A,W)\indep (Y_0,Y_1) \mid  (G   ,C)$.
  \end{assumption}
  
%  	Assumption \ref{assump: negative-control-indep} precludes the direct effect of the pair of negative controls $A$ and $W$ on outcome $Y$.
 {Under} Assumption \ref{assump: negative-control-indep}, {we can} view the observed  variables $A$ and $W$ as   proxies of $G$, the role of which resembles the usual instrumental variables that preclude direct effects on the outcome $Y$ \citep{Angrist:1996}; see Fig. \ref{fig: NG-graph} for an illustration. Similar assumptions have been widely used in principal stratification literature \citep{Ding:2011,Jiang2016,Wang2017iv,Luo-multiarm-2021}.
%  	see Fig \ref{fig: NG-graph} for 
%  	, which is similar to the exclusion restriction in the instrumental variable analysis \citep{Angrist:1996}. Fig.  \ref{fig: NG-graph}(b) gives a causal diagram illustrating Assumption \ref{assump: negative-control-indep}. 
%  	The main difficulty with the identifiability problem is that the principal strata are inherently latent. Assumption   \ref{assump: negative-control-indep} makes it possible to identify {principal causal effects},   %since the observed proxies $A$ and $W$ can be used to recover the information of  principal strata.  
%  	since the observed proximal variables $A$ and $W$ can be viewed as substitutional variables for the confounded treatment assignment $Z$ and latent principal stratum $G$, respectively.  
 	In the next subsection,	we shall  consider to relax this assumption based on semiparametric or parameric models.
 \begin{theorem}
 \label{thm: PCE-identification-AC}
 Suppose that Assumptions \ref{assumption: monotonicity}, \ref{assumption: latent ignorability}(i),  \ref{assumption: negative-control-variables} and \ref{assumption:  bridge-fun} hold. Then the {conditional probabilities} of principal strata 
 are  identified by 
\begin{equation}
\label{eq: principal-score-A-C}
 \begin{array}{lcl}\omega_{{\AT}}(Z,A,C)= E\{h(0, W,C)\mid Z,A,C\},\\
\omega_{{\NT}}(Z,A,C)= 1-E\{h(1, W,C)\mid  Z,A,C\},\\
\omega_{{\CO}}(Z,A,C)=1-\omega_{{\AT}}(Z,A,C)-\omega_{{\NT}}(Z,A,C).\\
 % \int_{-\infty}^{+\infty}h(z, W,C)f(w\mid a,c)dw
% {\blue \operatorname{pr}(S_t=1\mid Z=z ,A,C)=E\{h(t,W,C)\mid Z=z ,A,C\} .}
 \end{array}
\end{equation}   
{Under   additional Assumptions \ref{assumption: latent ignorability}(ii) and \ref{assump: negative-control-indep}, the {principal causal effects} are  identifiable if  for any $C=c$, the functions in the following two vectors	    \begin{equation}
	        \label{eq: indep-1}
 \{\eta_{{\CO}}(0,A,c),\eta_{{\NT}}(0,A,c)\}^\T\quad\text{and}\quad \{\eta_{{\CO}}(1,A,c),\eta_{{\AT}}(1,A,c)\}^\T
	    \end{equation} are respectively linearly independent. }

 \end{theorem}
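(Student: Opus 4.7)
The plan is to split the proof along the two assertions of the theorem, first identifying the conditional principal probabilities $\omega_g(Z,A,C)$ and then identifying the principal causal effects through the mixture decompositions in \eqref{eq: the-mixture-expression}.

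For the first assertion, I would begin by rewriting the principal stratum indicators in terms of $p_z(C,U)$. Under Assumption \ref{assumption: monotonicity}, $G=\AT$ is equivalent to $S_0=1$ and $G=\NT$ is equivalent to $S_1=0$; combined with Assumption \ref{assumption: latent ignorability}(i) and the joint independence $(Z,A)\indep (S_0,S_1,W)\mid(C,U)$ from Assumption \ref{assumption: negative-control-variables}, this yields $\pr(G=\AT\mid Z,A,C,U)=p_0(C,U)$ and $\pr(G=\NT\mid Z,A,C,U)=1-p_1(C,U)$, where $p_z(C,U)=\pr(S=1\mid Z=z,C,U)$. Substituting the confounding bridge relation of Assumption \ref{assumption:  bridge-fun}, i.e., $p_z(C,U)=E\{h(z,W,C)\mid C,U\}$, and then invoking $(Z,A)\indep W\mid (C,U)$ from Assumption \ref{assumption: negative-control-variables} to replace the inner conditioning set by $(Z,A,C,U)$, the tower property would collapse the nested expectations and give $\omega_{\AT}(Z,A,C)=E\{h(0,W,C)\mid Z,A,C\}$; the analogous argument delivers the formulas for $\omega_{\NT}$ and $\omega_{\CO}$.

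For the second assertion, I would first use Assumption \ref{assump: negative-control-indep} to show that $\mu_{z,g}(X)$ actually reduces to $\mu_{z,g}(C)$, since $Y_z\indep (A,W)\mid (G,C)$. The strata $\mu_{0,\AT}(C)$ and $\mu_{1,\NT}(C)$ are then directly identified by $E(Y\mid Z=0,S=1,C)$ and $E(Y\mid Z=1,S=0,C)$ respectively. For the remaining two means, I would rewrite \eqref{eq: the-mixture-expression} with $A$ retained in the conditioning set but $W$ marginalized:
\begin{align*}
E(Y\mid Z=1,S=1,A=a,C=c) &= \eta_{\AT}(1,a,c)\mu_{1,\AT}(c)+\eta_{\CO}(1,a,c)\mu_{1,\CO}(c), \\
E(Y\mid Z=0,S=0,A=a,C=c) &= \eta_{\NT}(0,a,c)\mu_{0,\NT}(c)+\eta_{\CO}(0,a,c)\mu_{0,\CO}(c),
\end{align*}
where the weights $\eta_g(z,a,c)$ are obtained from the $\omega_g(z,a,c)$ identified in the first assertion. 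Viewing each display as a functional equation in $a$ (with $c$ fixed), the linear independence conditions in \eqref{eq: indep-1} force uniqueness of the coefficient pair $(\mu_{1,\AT}(c),\mu_{1,\CO}(c))$ and $(\mu_{0,\NT}(c),\mu_{0,\CO}(c))$, thus identifying the four remaining conditional means. Combining these with $\pi_g(C)=E\{\omega_g(Z,A,C)\mid C\}$ in the representation \eqref{eq: the-expression-of-PCE}, adjusted to integrate over $C$ rather than full $X$ since $\mu_{z,g}(X)=\mu_{z,g}(C)$, yields $\mu_{z,g}$ and hence $\Delta_g=\mu_{1,g}-\mu_{0,g}$.

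The main obstacle is not any single long calculation but rather keeping track of which conditional independencies are being used at each transfer between the latent and observable levels. Assumption \ref{assumption: negative-control-variables} is rich enough to supply several independencies simultaneously, and the proof requires extracting precisely the right ones to move the bridge identity from the $(C,U)$-level to the observable $(Z,A,C)$-level without introducing spurious dependence on $Z$ or $A$ through $W$. Once this bookkeeping is done, the second assertion reduces to a well-posed linear problem whose non-degeneracy is exactly what the linear independence hypotheses in \eqref{eq: indep-1} guarantee.
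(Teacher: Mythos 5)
Your proposal is correct and follows essentially the same route as the paper: the first assertion is obtained by the same chain of conditional-independence manipulations (monotonicity to convert strata into $\{S_0=1\}$ and $\{S_1=0\}$, Assumption \ref{assumption: negative-control-variables} to move the bridge identity from the $(C,U)$-level to the $(Z,A,C)$-level via the tower property), and the second assertion is the same mixture decomposition in $A$ with $W$ marginalized, resolved by the linear independence conditions and plugged into the representation \eqref{eq: the-expression-of-PCE}. Your use of $\pi_g(C)=E\{\omega_g(Z,A,C)\mid C\}$ in place of the paper's $\pi_g(A,C)$ is an equivalent reformulation under Assumption \ref{assump: negative-control-indep}, not a substantive difference.
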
 
  
  The identifiability result \eqref{eq: principal-score-A-C} in
 Theorem \ref{thm: PCE-identification-AC} links the confounding bridge function with the conditional probabilities of principal strata given  observed variables $(Z,A,C)$. {With an additional completeness condition, the bridge function in Assumption \ref{assumption:  bridge-fun} can be equivalently characterized by a solution to an equation based on observed variables (see Lemma \ref{lem: complete} in the supplementary materials).}
 Note that Assumption \ref{assumption:  bridge-fun} only requires the existence of solutions to the integral equation. Theorem \ref{thm: PCE-identification-AC} implies that even if $h(z,W,C)$ is not unique, all solutions to Assumption \ref{assumption:  bridge-fun} must result in an identical value of each conditional  proportion of the principal stratification.  
 
%  , equation \eqref{eq: principal-score-A-C} shows that the proportions of principal strata are  identifiable conditioning on the observed variables $(Z,A,C)$.  
% The   result links the confounding bridge function and proportions of principal stratification, which is motivated from \citet{Miao2018Identifying,miao2020confounding}.  In the absence of intermediate variable, \citet{miao2020confounding} also propose a confounding bridge approach for negative control inference on causal effects. 
%   Note that  equation \eqref{eq: principal-score-A-C} does not require the uniqueness of a solution to the integral equation in Assumption \ref{assumption:  bridge-fun}, all solutions lead to    the uniqueness of   $\omega_{g}(Z,A,C)$. 
    {In the absence of unmeasured confounding, \citet{Ding:2011} and \citet{Wang2017iv}  use only one {proxy} variable whose distribution is informative of principal stratum $G$ to establish nonparametric identification.  When the   principal strata are confounded by the unmeasured variable $U$,  
  Theorem \ref{thm: PCE-identification-AC} shows that {principal causal effects}  can also be identified with two {proxy}  variables.}  
The conditions in  \eqref{eq: indep-1}  are similar to the  relevance assumption in instrumental variable analyses \citep{Angrist:1996}, which requires the {association} between the negative control exposure $A$ and principal stratum $G$. Because the weights $\omega_g(z,A,c)$'s   are identified according to  \eqref{eq: principal-score-A-C},  the linear independence conditions among functions in each vector of \eqref{eq: indep-1} are in {principle} testable based on observed data.

% by the sample analogues of equation \eqref{eq: indep-1}.   
%	 Given the conditions in Theorem \ref{thm: PCE-identification-AC},  $ \mu_{z,g}$ can be identified from the following  equation (see the supplementary material for proof):
%\begin{equation}
%\label{eq: the-expression-of-PCE-AC}
%E(Y_z \mid G=g)={E\{\mu_{z,g}(A,C)\pi_{g}(A,C)\}}/{E\{\pi_{g}(A,C)\}},
%\end{equation}  
%where the principal score  $\pi_{g}(A,C)$ for all $g$ can be identified through Lemma \ref{lem: negative-control} and $\mu_{z,g}(A,C)$ can be nonparametrically or semiparametrically identified.   
 \subsection{Identification under semiparametric or parametric models} 
%Given Assumptions \ref{assumption: monotonicity} and \ref{assumption: negative-control-variables}, we can only nonparametrically identify  $\omega(Z,A,C)=\pr(G\mid Z,A,C)$ in Lemma \ref{lem: negative-control}. In order to  nonparametriclly identify the principal causal effect, we further require two negative control variables cannot directly affect $Y$ as shown in Assumption \ref{assump: negative-control-indep}.
  
In this section, we {relax} Assumption \ref{assump: negative-control-indep} to some extent and  discuss the identifiability of {principal causal effects} under semiparametric or parametric models.

	   	\begin{assumption}				\label{assumption: indep-of-W} 
		  $(Z,W)\indep (Y_0,Y_1) \mid(  G   ,C,A)$,  
	\end{assumption}  
	
	Assumption \ref{assumption: indep-of-W} is notably weaker than Assumption \ref{assump: negative-control-indep} by requiring only one substitutional variable $W$ for the principal stratum $G$, which allows negative control exposure $A$ to directly affect the outcome. This is in parallel to the usual assumption for the identifiability of principal causal effects when unmeasured confounding is absent \citep{Ding:2011,Jiang2016,Wang2017iv,Luo-multiarm-2021}. 
% 	The role of  $W$ in this assumption   is also  conceptually similar to the instrumental variable  \citep{Angrist:1996}. 
% Assumption \ref{assumption: indep-of-W}   precludes the direct effect of proximal variable $W$   on outcome $Y$, but allow  direct effect from $A$ to $Y$.   Assumption	    \ref{assumption: indep-of-W} is a widely used    identification assumption for {principal causal effects}   in the absence of unmeasured confounding $U$, where the auxiliary covariate $W$  can be viewed as a  substitutional variable for the latent  principal strata  \citep{Ding:2011,Jiang2016,Wang2017iv,Luo-multiarm-2021}.   
%{\lss Note that Assumption \ref{assumption: indep-of-W} has relaxed Assumption~\ref{assump: negative-control-indep} by allowing negative control treatment $A$ to directly affect the outcome,}  
We consider   a semiparametric linear model for $\mu_{z,g}(X)$   to {facilitate identification of} {principal causal effects} under Assumption \ref{assumption: indep-of-W}.
\begin{theorem}
\label{thm: linear-model-AC}
Suppose that Assumptions \ref{assumption: monotonicity}--\ref{assumption:  bridge-fun} and \ref{assumption: indep-of-W} hold.  We further assume $\mu_{z,g}(X)$ follows a  linear model:
   \begin{equation}
   \label{eq: linear-model}
\mu_{z,g}(X)=\theta_{z,g,0 }+ {\theta}_{c  }C +\theta_{a} A.
\end{equation}
Then the {principal causal effects} are  identified if   the functions in these two vectors
$$
   \left\{\eta_{{\CO}}(0,A,C),\eta_{{\NT}}(0,A,C),A,C\right\}^{\T} \;\text{and}\; \left\{\eta_{{\CO}}(1,A,C),\eta_{{\AT}}(1,A,C),A,C\right\}^{\T}  
$$ 
are respectively linearly {independent}. 
\end{theorem}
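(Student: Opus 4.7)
The plan is to combine the identification of the principal score weights from Theorem~\ref{thm: PCE-identification-AC} with the extra structure provided by Assumption~\ref{assumption: indep-of-W} and the linear model~\eqref{eq: linear-model} to reduce identification of each $\mu_{z,g}$ to a pair of linear regressions whose coefficients are pinned down by the stated linear independence conditions.

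First I would observe that Assumption~\ref{assumption: indep-of-W} eliminates any $W$-dependence: since $W\indep(Y_0,Y_1)\mid(G,C,A)$, one gets $\mu_{z,g}(X)=E(Y_z\mid G=g,C,A,W)=E(Y_z\mid G=g,C,A)$, so the model~\eqref{eq: linear-model} is in fact a model in $(C,A)$ alone, $\mu_{z,g}(C,A)=\theta_{z,g,0}+\theta_c C+\theta_a A$. Coarsening the mixture decomposition~\eqref{eq: the-mixture-expression} from $X$ down to $(C,A)$, by combining monotonicity and latent ignorability, then yields
\begin{align*}
E(Y\mid Z=0, S=1, C, A) &= \mu_{0,\AT}(C,A),\\
E(Y\mid Z=1, S=0, C, A) &= \mu_{1,\NT}(C,A),\\
E(Y\mid Z=1, S=1, C, A) &= \eta_{\AT}(1,C,A)\,\mu_{1,\AT}(C,A)+\eta_{\CO}(1,C,A)\,\mu_{1,\CO}(C,A),\\
E(Y\mid Z=0, S=0, C, A) &= \eta_{\CO}(0,C,A)\,\mu_{0,\CO}(C,A)+\eta_{\NT}(0,C,A)\,\mu_{0,\NT}(C,A),
\end{align*}
where each $\eta_g(z,C,A)$ equals $\pr(G=g\mid Z=z,S=s,C,A)$ and is identified from the formulas in Theorem~\ref{thm: PCE-identification-AC}.

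Substituting the linear model and using $\eta_{\AT}(1,C,A)+\eta_{\CO}(1,C,A)=1$ to absorb $\theta_c C+\theta_a A$ recasts the $(Z=1,S=1)$ identity as
\begin{equation*}
E(Y\mid Z=1,S=1,C,A)=\theta_{1,\AT,0}\,\eta_{\AT}(1,C,A)+\theta_{1,\CO,0}\,\eta_{\CO}(1,C,A)+\theta_c C+\theta_a A,
\end{equation*}
which is a linear regression in the regressor vector $\{\eta_{\AT}(1,A,C),\eta_{\CO}(1,A,C),A,C\}$. The stated linear independence of this vector then identifies $\theta_{1,\AT,0},\theta_{1,\CO,0},\theta_c,\theta_a$; the analogous argument applied to the $(Z=0,S=0)$ regression, under its linear independence condition, identifies $\theta_{0,\CO,0},\theta_{0,\NT,0}$ and gives a consistent second determination of $\theta_c,\theta_a$. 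The remaining intercepts $\theta_{0,\AT,0}$ and $\theta_{1,\NT,0}$ are read off the two pure-stratum identities.

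Finally I would marginalise via~\eqref{eq: the-expression-of-PCE}: with $\pi_g(C,A)=\omega_g(1,C,A)\pr(Z=1\mid C,A)+\omega_g(0,C,A)\pr(Z=0\mid C,A)$ identified from the weights of Theorem~\ref{thm: PCE-identification-AC} and the observed propensity, one sets $\mu_{z,g}=E\{\mu_{z,g}(C,A)\,\pi_g(C,A)\}/E\{\pi_g(C,A)\}$ and takes differences to obtain $\Delta_g$. The main obstacle is the coarsening step: one must carefully verify that, at the $(Z,S,C,A)$ level, the mixture weights remain exactly the $\eta_g(z,C,A)$ supplied by Theorem~\ref{thm: PCE-identification-AC} and that the outcome means collapse to functions of $(C,A)$ only, so that the model~\eqref{eq: linear-model} interacts with the coarsened decomposition to produce a linear system in a finite family of identifiable functions. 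Once that reduction is established, the rest is a routine linear-algebra argument driven by the two linear independence hypotheses.
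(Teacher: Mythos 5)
Your proposal is correct and follows essentially the same route as the paper's proof: identify $\omega_g$ and $\eta_g$ at the $(A,C)$ level via the bridge-function result, write $E(Y\mid Z,S,A,C)$ as a two-component mixture, substitute the linear model and use $\eta_{\AT}(1,A,C)+\eta_{\CO}(1,A,C)=1$ to reduce to a regression on $\{\eta_{\AT},\eta_{\CO},A,C\}$ whose coefficients are pinned down by the stated linear independence, and then marginalise through \eqref{eq: the-expression-of-PCE} with $\pi_g(A,C)$. The coarsening step you flag as the main obstacle is handled exactly as you describe (Assumption \ref{assumption: indep-of-W} collapses $\mu_{z,g}(X)$ to a function of $(A,C)$, and the mixture weights at the $(Z,S,A,C)$ level are precisely the identified $\eta_g(z,A,C)$), so no gap remains.
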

%Note that whether Theorem \ref{thm: PCE-identification-AC} or \ref{thm: linear-model-AC}, we need to assume that     Assumption \ref{assumption: indep-of-A} holds, which means that we need the negative control intermediate variable $W$ have no direct effect to outcome $Y$.  

% From Theorems \ref{thm: PCE-identification-AC}-\ref{thm: linear-model-AC}, we have established the identifiability of {principal causal effects}  via  Assumption \ref{assump: negative-control-indep}  or under a linear model setting. %Note that for both theorems, we need to assume   Assumption \ref{assumption: indep-of-W}, which requires that the proxy variable $W$   have no direct effect on outcome.  

 One may further relax Assumption \ref{assumption: indep-of-W} by
% Note that we assume Assumption \ref{assumption: indep-of-W} in both theorems,  which requires that the proximal variable $W$   have no direct effect on outcome. 
%The reason is that we can only identify $\omega_g(Z,A,C)$ nonparametrically in the first stage  but not $\omega_g(Z,X)$ as shown in Lemma \ref{lem: negative-control}.  
%In order to further  relax Assumption \ref{assumption: indep-of-A}, we consider identify the proportions of the principal strata  $\omega_g(Z,X)$ among all observed covariates $X$ as well as $Z$ via some parametric  models. 
% In contrast to \eqref{eq: linear-model}, one may also 
considering the following  model,
 \begin{equation}
   \label{eq: linear-model-2}
\mu_{z,g}(X)=\theta_{z,g,0}+{\theta}_{c  } {C}+\theta_{a} A+\theta_{w} W,
\end{equation} 
which allows the  outcome to be  affected by all   observed covariates $X$, including the negative control intermediate variable  $W$. {The above semiparametric linear model \eqref{eq: linear-model-2}  has also been  considered in \citet{Ding:2011}
% , \citet{Wang2017iv}
and \citet{Luo-multiarm-2021}.} As shown in the supplementary material, 
% given the conditions in Theorem \ref{thm: linear-model-AC}, 
the parameters in \eqref{eq: linear-model-2} are  identifiable under some regularity conditions. This means that we can identify the conditional outcome mean  $\mu_{z,g}(X)$. However, since the proportions of principal strata $\pi_g(X)$ are not identifiable under  the assumptions in  Theorem \ref{thm: linear-model-AC}, the parameter $\mu_{z,g}$ expressed in \eqref{eq: the-expression-of-PCE} cannot be
identified unless additional conditions exist. Below we consider parametric models that would make it possible to identify the principal causal effects even {if} Assumption~\ref{assumption: indep-of-W} were violated.

% We would focus on linear model \eqref{eq: linear-model-2} again later and show that {principal causal effects} are also identifiable  with additional conditions. To further relax Assumption \ref{assumption: indep-of-W}, we    consider identifying the proportions of  principal strata   conditioning on treatment $Z$ and   covariates $X$  via some parametric models.    
\begin{proposition}
\label{prop: idetification-of-principal-stratum} 
Suppose that Assumptions \ref{assumption: monotonicity}, \ref{assumption: latent ignorability}(i),  \ref{assumption: negative-control-variables} and \ref{assumption:  bridge-fun} hold.  The principal stratum $G$ follows an ordered probit model, namely,   
\begin{equation}
    \label{eq: ordered-probit-model}
    G=\left\{\begin{array}{rcl}{\NT},&&~~\mathrm{if}\;G^{\ast}+\varepsilon\leq 0,\\{\CO},&&~~\mathrm{if}\;0<G^{\ast}+\varepsilon\leq \exp({\psi_1}),\\{\AT},&&~~\mathrm{if}\;\exp({\psi_1})<G^{\ast}+\varepsilon,\end{array}\right.
\end{equation}
where $G^{\ast}=\psi_{0}+\psi_{z}Z+\psi_{w}W+\psi_{a}A+\psi_{c}C$ and $\varepsilon\sim N(0,1)$.  
We further assume that $W\mid Z,A,C\sim N\{{m}(Z,A,C),\sigma^2_w\} $ 
and the functions $\{1,Z,A,C,m(Z,A,C)\}{^\T}$  are linearly independent.  
Then the proportions of principal strata  $\omega_g(Z,X)$  are identified for all $g$. 
\end{proposition}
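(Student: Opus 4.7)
The plan is to leverage Theorem~\ref{thm: PCE-identification-AC}, which under the stated conditions already identifies $\omega_g(Z,A,C) = \pr(G=g\mid Z,A,C)$. Since $X=(A,W,C^\T)$, the conditional proportion $\omega_g(Z,X) = \pr(G=g\mid Z,A,W,C)$ can be read off directly from the ordered probit model \eqref{eq: ordered-probit-model} once the parameters $(\psi_0,\psi_z,\psi_w,\psi_a,\psi_c,\psi_1)$ are identified. The task therefore reduces to identifying these parameters by matching the probit model, marginalized over $W$, to the already-identified $\omega_g(Z,A,C)$.

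The key step is to integrate the probit model against the conditional normal distribution $W\mid Z,A,C \sim N\{m(Z,A,C),\sigma_w^2\}$. Using the standard Gaussian identity $E\{\Phi(a+bW)\mid Z,A,C\} = \Phi\{(a+bm)/\sqrt{1+b^2\sigma_w^2}\}$, applied with $a=-\psi_0-\psi_z Z-\psi_a A-\psi_c C$ and $b=-\psi_w$, one obtains
\begin{equation*}
\Phi^{-1}\{\omega_{\NT}(Z,A,C)\} \;=\; \frac{-\psi_0-\psi_z Z-\psi_w m(Z,A,C)-\psi_a A-\psi_c C}{\tau},
\end{equation*}
where $\tau=\sqrt{1+\psi_w^2\sigma_w^2}$, and analogously $\Phi^{-1}\{\omega_{\NT}(Z,A,C)+\omega_{\CO}(Z,A,C)\}=\exp(\psi_1)/\tau + \Phi^{-1}\{\omega_{\NT}(Z,A,C)\}$. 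Since the left-hand sides are identified by Theorem~\ref{thm: PCE-identification-AC} and the function $m(Z,A,C)$ is identified from the observed regression of $W$ on $(Z,A,C)$, the assumed linear independence of $\{1,Z,A,C,m(Z,A,C)\}^\T$ uniquely pins down the scaled parameters $\tilde\psi_j \equiv \psi_j/\tau$ for $j\in\{0,z,w,a,c\}$, together with $\exp(\psi_1)/\tau$.

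What remains is to disentangle the scale $\tau$ from the regression coefficients. Substituting $\psi_w=\tilde\psi_w\tau$ into $\tau^2=1+\psi_w^2\sigma_w^2$ yields the explicit relation $\tau^2=1/(1-\tilde\psi_w^2\sigma_w^2)$; the denominator is strictly positive because $\tilde\psi_w^2\sigma_w^2=(\tau^2-1)/\tau^2<1$ holds tautologically, and $\sigma_w^2$ is identified as the conditional variance of $W$ given $(Z,A,C)$. Hence $\tau>0$ is identified, all of $(\psi_0,\psi_z,\psi_w,\psi_a,\psi_c,\psi_1)$ are recovered, and plugging them back into \eqref{eq: ordered-probit-model} gives closed-form expressions for $\omega_g(Z,X)$, $g\in\{\NT,\CO,\AT\}$. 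I expect the main obstacle to be precisely this last step: marginalizing over $W$ collapses $(\psi_w,\sigma_w^2)$ into the single nuisance scale $\tau$, so disentangling it relies crucially on the normality of $W$, the unit-variance normalization of the probit error, and the linear independence condition that separates the contribution of $m(Z,A,C)$ from the other regressors.
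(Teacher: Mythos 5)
Your proposal is correct and follows essentially the same route as the paper: identify $\omega_g(Z,A,C)$ via the bridge function, marginalize the ordered probit over $W\mid Z,A,C\sim N\{m(Z,A,C),\sigma_w^2\}$ with the Gaussian identity, and use the linear independence of $\{1,Z,A,C,m(Z,A,C)\}^\T$ to pin down the coefficients. The one place you go beyond the paper's write-up is the explicit deconvolution of the scale $\tau=\sqrt{1+\psi_w^2\sigma_w^2}$ via $\tau^2=1/(1-\tilde\psi_w^2\sigma_w^2)$ — the paper compresses this into ``use probit regression to identify all the parameters'' — and your derivation of that step is valid.
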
 
%We used an ordered probit model to the latent variable $G$ and all observed variable $X$ with $Z$.
%This relatively simple discrete model ensures that the monotonicity is established, but it also makes the marginal effects of different compliance types on the covariates the same.

{%  
As implied by the latent ignorability assumption, the association $Z\nindep G\mid X$ may occur in the presence of unobserved confounder $U$, so the coefficient $\psi_z$ in the model for $G$ after~\eqref{eq: ordered-probit-model} may not be {zero}. 
% This relatively simple ordinal model  is compatible with the monotonicity, and also makes the marginal effects of different compliance types on covariates the same.  
 The ordinal model in~\eqref{eq: ordered-probit-model} is compatible with monotonicity  assumption and can be rewritten in the following form
%  , and  makes the same marginal effects on covariates   for different compliance types.   
% Equation \eqref{eq: ordered-probit-model}  
% is equivalent to the following model parameterizations 
under this assumption:
\begin{equation}
\label{eq: equiv-weight}
    \begin{array}{rcl}\pr\left(S_0=1\mid Z,X;\psi\right)&=&\Phi\{\psi_0-\exp(\psi_1)+\psi_zZ+\psi_wW+\psi_aA+\psi_cC\},\\\pr\left(S_1=1\mid Z,X;\psi\right)&=&\Phi(\psi_0+\psi_zZ+\psi_wW+\psi_aA+\psi_cC),\end{array}
\end{equation} 
where  $\psi=(\psi_0,\psi_1,\psi_z,\psi_w,\psi_a,\psi_c)^\T$. In fact, we model the distribution of potential values of the intermediate variable using a generalized linear model, which is similar in spirit to the marginal and nested structural mean models proposed by  \citet{robins2000marginala}. 
Under such parametric models, Proposition~\ref{prop: idetification-of-principal-stratum} shows that we can identify the conditional proportions of the principal strata $\omega_g(Z, X)$ given all observed covariates.  This is a stronger result than that in Theorem~\ref{thm: PCE-identification-AC}, where only the conditional proportions of principal strata given covariates $(A, C)$ are identifiable. With this result, we can consider another weaker version of  Assumption \ref{assump: negative-control-indep},  which is in parallel to Assumption \ref{assumption: indep-of-W}.
  	\begin{assumption}				\label{assumption: indep-of-A}
      $(Z,A)\indep (Y_0,Y_1) \mid(  G   ,C,W)$.   
	\end{assumption}  
	
 This condition is  similar to the ``selection on types" assumption considered in \citet{KEDAGNI2021}, which entails that the negative control exposure $A$ has no direct effect on the outcome $Y$.  We next consider  identification of {principal causal effects} under the ordinal model \eqref{eq: ordered-probit-model} and other various  conditions.% Given Assumption  \ref{assumption: Y-ignorability}, our proposed framework is applicable  to a variety of topics on {principal causal effects} even in the presence of unobserved confounders. 
%	In this paper, we restrict our attention to identifyidentifying the {principal causal effects}, more specifically,  we first	impose some independent conditions for negative control variables.
	\begin{theorem} 
	\label{thm: PCE-identification-X}
Under Assumptions \ref{assumption: monotonicity}--\ref{assumption:  bridge-fun} and the model parameterization in Proposition \ref{prop: idetification-of-principal-stratum},  the following statements hold: 
	\begin{itemize}
	 \item[(i)] with additional Assumption \ref{assump: negative-control-indep},  the {principal causal effects}    are   identified if  for any $C=c$,     the functions in the  vectors
	 $
 \{\eta_{{\CO}}(0,A,W,c),\eta_{{\NT}}(0,A,W,c)\}^\T$ and  $\{\eta_{{\CO}}(1,A,W,c),\\\eta_{{\AT}}(1,A,W,c)\}^\T $ 
	    are respectively linearly independent.
  
	    \item[(ii)] with additional Assumption \ref{assumption: indep-of-W}, the {principal causal effects}  are   identified if   for any $(A,C)=(a,c)$,  the functions in the vectors $
     \{\eta_{{\CO}}(0,a,W,c),\eta_{{\NT}}(0,a,W,c)\}^\T$ and $ \{\eta_{{\CO}}(1,a,W,c),\eta_{{\AT}}(1,a,W,c)\}^\T$
	    are respectively linearly independent.
	    
\item[(iii)] with additional Assumption \ref{assumption: indep-of-A},   the {principal causal effects}    are   identified if  for any $(W,C)=(w,c)$, the functions in the vectors  $
     \{\eta_{{\CO}}(0,A,w,c),\eta_{{\NT}}(0,A,w,c)\}^\T$ and $ \{\eta_{{\CO}}(1,A,w,c),\eta_{{\AT}}(1,A,w,c)\}^\T$
	    are respectively linearly independent.
 %the conditions in \eqref{eq: linear-nindep-1} are linearly independent   vectors of  $(A,C)$.
   \item[(iv)] %Suppose  $\mu_{z,g}(X)$ follows a linear model:
  with the additional    model  \eqref{eq: linear-model-2}, the {principal causal effects}  are   identified
 if the functions in the vectors   $
   \left\{\eta_{{\CO}}(0,X),\eta_{{\NT}}(0,X),X\right\}^{\T}$ and $\left\{\eta_{{\CO}}(1,X),\eta_{{\AT}}(1,X),X\right\}^{\T}  $
are respectively linearly independent.
	\end{itemize} 
	\end{theorem}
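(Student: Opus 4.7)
The plan is to build on Proposition~\ref{prop: idetification-of-principal-stratum}: under the ordered probit parameterization, the conditional proportions $\omega_g(Z,X)$ are identified for all $g$, which in turn identifies the marginal principal score $\pi_g(X)=\sum_z\omega_g(z,X)\pr(Z=z\mid X)$ and the mixture weights $\eta_g(z,X)$ appearing in \eqref{eq: the-mixture-expression}. Since $\mu_{0,\AT}(X)=E(Y\mid Z=0,S=1,X)$ and $\mu_{1,\NT}(X)=E(Y\mid Z=1,S=0,X)$ are directly identified under Assumption~\ref{assumption: monotonicity}, the remaining unknowns are $\mu_{1,\CO}(X)$, $\mu_{1,\AT}(X)$, $\mu_{0,\CO}(X)$, $\mu_{0,\NT}(X)$, which satisfy the two mixture equations in \eqref{eq: the-mixture-expression}. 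Once these means are pinned down, plugging into \eqref{eq: the-expression-of-PCE} recovers $\mu_{z,g}$ and hence $\Delta_g$.

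For parts (i)--(iii), I first use the relevant exclusion restriction to collapse the functional dependence of $\mu_{z,g}(X)$. Under Assumption~\ref{assump: negative-control-indep}, combined with Assumption~\ref{assumption: latent ignorability}(ii), iterated expectation yields $\mu_{z,g}(X)=E(Y_z\mid G=g,C)$, so $\mu_{z,g}$ depends only on $C$. Fixing $C=c$ and varying $(A,W)$, the $(Z,S)=(1,1)$ mixture equation becomes a family of linear equations in two unknowns $\mu_{1,\CO}(c),\mu_{1,\AT}(c)$ with coefficient functions $\eta_{\CO}(1,A,W,c),\eta_{\AT}(1,A,W,c)$; the postulated linear independence renders the system invertible and identifies the pair, with the $(Z,S)=(0,0)$ equation handling $\mu_{0,\CO}(c),\mu_{0,\NT}(c)$ analogously. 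Parts (ii) and (iii) follow the same template: Assumption~\ref{assumption: indep-of-W} reduces $\mu_{z,g}(X)$ to a function of $(A,C)$ with residual variation in $W$, while Assumption~\ref{assumption: indep-of-A} reduces it to a function of $(W,C)$ with residual variation in $A$.

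For part (iv), I substitute model \eqref{eq: linear-model-2} into the $(Z,S)=(1,1)$ mixture equation to obtain
\[
E(Y\mid Z=1,S=1,X)=\eta_{\CO}(1,X)\theta_{1,\CO,0}+\eta_{\AT}(1,X)\theta_{1,\AT,0}+\theta_cC+\theta_aA+\theta_wW,
\]
a linear combination of the components of $\{\eta_{\CO}(1,X),\eta_{\AT}(1,X),X\}$. Linear independence of this vector identifies the intercepts $\theta_{1,\CO,0},\theta_{1,\AT,0}$ together with the shared slopes $(\theta_c,\theta_a,\theta_w)$; the $(Z,S)=(0,0)$ equation then recovers $\theta_{0,\CO,0},\theta_{0,\NT,0}$ by the same argument.

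The main obstacle I anticipate is carefully propagating each exclusion restriction from $(Y_0,Y_1)$ through the stratum-conditional expectation $\mu_{z,g}(X)$, noting that $G$ is a function of $(S_0,S_1)$ and that Assumption~\ref{assumption: latent ignorability}(ii) already conditions on $G$; I must verify in each case that the allowed set of arguments of $\mu_{z,g}$ collapses exactly as the theorem's linear independence statement presumes. A secondary subtlety in part (iv) is that the shared slopes $(\theta_c,\theta_a,\theta_w)$ must agree across the two mixture equations, but this is consistent with the shared-coefficient structure of \eqref{eq: linear-model-2} and does not create over-determination.
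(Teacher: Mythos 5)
Your proposal is correct and follows essentially the same route as the paper's proof: identify $\omega_g(Z,X)$ and hence $\eta_g(z,X)$ and $\pi_g(X)$ via Proposition~\ref{prop: idetification-of-principal-stratum}, read off $\mu_{0,\AT}(X)$ and $\mu_{1,\NT}(X)$ directly from monotonicity, use each exclusion restriction to collapse the arguments of $\mu_{z,g}(X)$ so that the mixture equations in \eqref{eq: the-mixture-expression} become linear systems solvable under the stated linear independence conditions, and finally plug into \eqref{eq: the-expression-of-PCE}. The subtlety you flag about propagating the independence assumptions through $\mu_{z,g}(X)$ is handled in the paper exactly as you anticipate, via $E(Y\mid Z=1,G=g,X)=E(Y\mid Z=1,G=g,C)$ (and its analogues), so no gap remains.
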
 

In contrast to Theorem~\ref{thm: PCE-identification-AC},	
Theorem \ref{thm: PCE-identification-X} shows that under the parametric models in
 Proposition \ref{prop: idetification-of-principal-stratum},  the {principal causal effects} are always identifiable as long as certain linear independence conditions are satisfied, albeit
 Assumption~\ref{assump: negative-control-indep} may partially or completely fails.
% In   Theorem \ref{thm: PCE-identification-X}(d), linear model  \eqref{eq: linear-model-2} includes all observed covariates as predictors,  and  Assumption \ref{assump: negative-control-indep} requires $\theta_a$ and $\theta_w$ to be null  simultaneously in this linear  model.
Besides, since the functions $\{\eta_g(z,X);z=0,1;g=\AT,\NT,\CO\}$ are identifiable based on Proposition \ref{prop: idetification-of-principal-stratum},
those linear  independence conditions in Theorem \ref{thm: PCE-identification-X} {are} testable from observed data.
% , since the proportions $\omega_g(Z,X)$ and $\eta_g(Z,X)$ can be identified by Proposition \ref{prop: idetification-of-principal-stratum}. %{\red adding more comments, compared to the condition \cite{Luo-multiarm-2021}...} 
	  } 

 	\section{Estimation}
 	\label{sec: estimation}
  While the nonparametric identification results provide useful insight, nonparametric estimation, however, is often not practical especially when the number of covariates is large due to the curse of dimensionality.  We consider  {parametric working models} for estimation of {principal causal effects} 
%   under various conditions 
  in this section. 
%   We first introduce the parametric models involved in the   treatment and intermediate variable modeling process.
%We first consider following parametric models. %Although  the integral functions  can be addressed nonparametrically, it may not be feasible in practice.   {\red xxx}: 

%Our estimation procedure proceed as follows:  we first obtain the estimators of $\omega_g(Z,A,C)$ and $\pi_g(A,C)$  with  a parametric model of  bridge function in Lemma \ref{lem: negative-control}; next, provided the conditions in Proposition \ref{prop: idetification-of-principal-stratum}, we can further obtain the estimates for   $\omega_g(Z,X)$ and $\pi_g(X)$  by using the estimate of  $\omega_g(Z,A,C)$ from  previous step; finally, given conditions in Theorem \ref{thm: PCE-identification-AC}-\ref{thm: PCE-identification-X}, we estimate {principal causal effects}  based on the weights obtained in the first two steps.

%Our estimation procedure proceeds as follows: we first obtain the estimators of $\omega_g(Z,A,C)$ and $\pi_g(A,C)$  using the  parametric model of  bridge function in Lemma \ref{lem: negative-control}; next, under the conditions of Proposition 1,   the  estimators  $\omega_g(Z,A,C)$  in the previous step can be used to further obtain the estimators  of $\omega_g(Z,X)$ and $\pi_g(X)$; Finally, given the conditions in Theorem \ref{thm: PCE-identification-AC}-\ref{thm: PCE-identification-X}, we can   estimate the  {principal causal effects}  based on the weights obtained in the first two steps.
\begin{model}[Bridge function]
\label{model: bridge-function}
The bridge function $h(Z,W,C ;\alpha)$ is known up to a finite-dimensional  parameter $\alpha$.  
%  {We further restrict  bridge function such that $h(1,W,C ;\alpha)\geqh(0,W,C ;\alpha)$, which is required by the monotonicity assumption  (see the supplementary material for details). }  
\end{model} 
\begin{model}[Treatment and negative control intermediate variable] 
\label{model: distribution-of-models}	The treatment model $\pr(Z\mid   A,C;\beta)$ is known up to a finite-dimensional parameter $\beta$, and the negative control intermediate variable model $f(W\mid Z,  A,C;\gamma)$ is known up to a finite-dimensional parameter $\gamma$. %\wl{the parameters $\beta$ and $\gamma$ have been exchanged.}
% We consider the following distributional assumptions:
% \begin{itemize}
% 	\item[(a)]
% The probability $f(W\mid Z,  A,C)$ is known up to a finite-dimensional  model with  an unknown  parameter $\beta$.
% \item[(b)] The probability $\pr(Z\mid   A,C)$ is known up to finite-dimensional  model with  an unknown  parameter $\gamma$.
% \end{itemize}    
	\end{model} 
% 	We also posit some parametric models  for the conditional expectation $\mu_{z,g}(X)$.
	 \begin{model}[Outcome]
	 \label{model: outcome-pce}
The conditional outcome mean function $\mu_{z,g}(X;\theta_{z,g})$ is known up to a finite-dimensional 
% model with an unknown 
parameter $\theta_{z,g}$. %\wl{add some comment below; use one example to illustrate this model parameterization; no need to list all settings}
% \label{model: outcome regression} 
% \begin{itemize} 
% \item[(a)] Given the conditions in Theorem \ref{thm: PCE-identification-AC} or    \ref{thm: PCE-identification-X}(a), we consider
%  $ \mu_{z,g}(X) = \mu_{z,g}(C;\theta_{z,g}).$
% \item[(b)]  Given the conditions in Theorem \ref{thm: PCE-identification-X}(b), we consider
%   $\mu_{z,g}(X)= \mu_{z,g}(A,C;\theta_{z,g}).$ 
% \item[(c)] Given the conditions in Theorem \ref{thm: PCE-identification-X}(c), we consider
%  $ \mu_{z,g}(X) = \mu_{z,g}(W,C;\theta_{z,g}).$  
% \item[(d)]  Given the conditions in Theorem \ref{thm: linear-model-AC} or    \ref{thm: PCE-identification-X}(d),  we consider  model  \eqref{eq: linear-model} or \eqref{eq: linear-model-2}, respectively.
% \end{itemize}
\end{model}    
%  In practice, we can specify that the model  
 Note that   $\mu_{z,g}(X;\theta_{z,g})$ in Model  \ref{model: outcome-pce}  should be 
 compatible with the   requirements
 in Theorems \ref{thm: PCE-identification-AC}-\ref{thm: PCE-identification-X}. For example, under the conditions in Theorem \ref{thm: PCE-identification-AC} or    \ref{thm: PCE-identification-X}(i), we consider  a parametric form for $ \mu_{z,g}(X ;\theta_{z,g})$ that is only related to the covariate $C$, but should not be dependent on $A$ and $W$. Given the above parameterizations in Models~\ref{model: bridge-function}--\ref{model: outcome-pce}, we are  now ready to provide a three-step procedure for estimation of the principal causal effects.

In the first step, we   aim to estimate the conditional probabilities $\omega_g(Z,A,C)$  considered in Theorem \ref{thm: PCE-identification-AC}. 
The expression in \eqref{eq: principal-score-A-C} implies that for estimation of $ \omega_g(Z,A,C)$, we only need to  estimate the parameters $\alpha$ and $\gamma$ that are in the bridge function $h(Z,W,C;\alpha)$ and negative control intermediate variable model $f(W\mid Z,A,C;\gamma)$, respectively. 
% As shown in the supplementary material,  
Under Assumptions \ref{assumption: negative-control-variables},  \ref{assumption:  bridge-fun}, {and the completeness condition, we have the following equation} (see Lemma \ref{lem: complete}  in the    supplementary material for details):
$$\pr(S=1\mid Z,A,C)=E\{h(Z,W,C;\alpha)\mid Z,A,C\}.$$
We then obtain an estimator $\widehat{\alpha}$ by solving the following  estimating equations
\begin{equation}
    \label{eq: parameter-bridge-function}
    \mathbb{P}_n[\{S-h(Z,W,C;\alpha)\}B(Z,A,C)]=0,
\end{equation}
where $\mathbb{P}_n(\xi) = \sum_{i=1}^n\xi_i/n$ for some generic variable $\xi$, and $B(Z,A,C)$ is an arbitrary vector of functions  with dimension no smaller than that of $\alpha$. If the dimension of the user-specified function $B(Z,A,C)$ is larger than that of $\alpha$, we may adopt the generalized method of moments \citep{Hansen:1982} to estimate $\alpha$. We next obtain the estimators $\widehat\beta$ and $\widehat\gamma$ in Model \ref{model: distribution-of-models}  by maximum likelihood estimation.  With the parameter estimates $\widehat\alpha$ and $\widehat\gamma$, we can finally obtain the estimators $\widehat\omega_g(Z,A,C)$   based on~\eqref{eq: principal-score-A-C}. The calculation of the estimated probabilities involves  {integral equations} with respect to the distribution $f(W\mid Z,A,C; \widehat{\gamma})$, which may be numerically approximated to circumvent computational difficulties. Consequently, we  have a plug-in {estimator}  $\widehat\eta_{g}(z,A,C)$ {of}   $\eta_{g}(z,A,C)$ defined in~\eqref{eq: the-mixture-expression}
% Besides, if we
% % To obtain  the weights $\widehat\omega_g(Z,A,C)$, sampling techniques using the distribution $f(W\mid Z,A,C; \widehat{\gamma})$ can be employed  to 
% % approximate the integral of equation \eqref{eq: principal-score-A-C}. 
% let $\pi_g(A,C)=\pr(G=g\mid A,C)$, then we also find 
and
an {estimator} of $\pi_g(A,C)$ as follows:
$$\widehat{\pi}_{g}(A,C )=\textstyle\sum_{z=0}^{1}\widehat{\omega}_{g}(z,A,C)\pr( z\mid A,C;\widehat\gamma).$$ 
In addition, if the model assumptions in Proposition \ref{prop: idetification-of-principal-stratum} hold, we can further estimate the conditional probabilities given fully observed covariates $\omega_g(Z,X)$ and  $\pi_g(X)$; the estimation details are {relegated} to the supplementary material for space cosiderations. As noted by Theorems \ref{thm: PCE-identification-AC}--\ref{thm: PCE-identification-X}, the estimation of principal causal effects requires different  conditional probabilities of principal strata, depending on which assumptions are imposed. For example, Theorem \ref{thm: PCE-identification-AC} requires $\omega_g(Z,A,C)$, whereas Theorem \ref{thm: PCE-identification-X} requires $\omega_g(Z,X)$. For simplicity, we denote these conditional probabilities  by a unified notation $\omega_g(Z,V)$ with $V=(A,C^\T)^\T$ in Theorems \ref{thm: PCE-identification-AC}--\ref{thm: linear-model-AC} and $V=X$ in Theorem~\ref{thm: PCE-identification-X}. The notations $\eta_g(z,V)$ and $\pi_g(V)$ are equipped with similar meanings.

In the second step, we aim to estimate the parameters $\theta_{z,g}$ for $z=0,1$ and $g=\AT,\NT,\CO$ in the outcome Model~\ref{model: outcome-pce}. To derive an estimator for $\theta_{0,g}$, we observe the following moment constraints by invoking the monotonicity assumption:
\begin{equation} \label{eqn:ee-at}
\begin{aligned}&E\big\{Y-\mu_{0,{\AT}}(X;\theta_{0,{\AT}})\mid Z=0,S=1,X\big\}=0, \\
&E\big\{Y-\textstyle\sum_{g={\CO},{\NT}}\widehat\eta_g(0,V)\mu_{0,g}(X;\theta_{0,g})\mid Z=0,S=0,X\big\}=0.
\end{aligned}
\end{equation} 
We emphasize here that the specifications of $\mu_{z,g}(X;\theta_{z,g})$ in Model~\ref{model: outcome-pce} may not always depend on all the observed covariates $X$ due to identifiability concerns; see also the discussions below Model~\ref{model: outcome-pce}. With the above moment constraints, we can apply the generalized method of moments again to obtain a consistent estimator for $\theta_{0,g}$. The estimation of $\theta_{1,g}$ is similar, because we have another pair of moment constraints:
\begin{equation} \label{eqn:ee-nt}
\begin{aligned}
&E\big\{Y-\mu_{1,{\NT}}(X;\theta_{1,{\NT}})\mid Z=1,S=0,X\big\}=0,\\
&E\big\{Y-\textstyle\sum_{g={\AT},{\CO}}\widehat\eta_g(1,V)\mu_{1,g}(X;\theta_{1,g})\mid Z=1,S=1,X\big\}=0.
\end{aligned}
\end{equation} 
%  We relegate the estimation procedure of $\omega_g(Z,X)$ and  $\pi_g(X)$ considered in Proposition \ref{prop: idetification-of-principal-stratum} to the supplementary material. 
%  It remains to estimate  the {principal causal effects} based on  Theorems \ref{thm: PCE-identification-AC}--\ref{thm: PCE-identification-X}.  Different theorem conditions require different weights. For example, Theorem \ref{thm: PCE-identification-AC} requires $\omega_g(Z,A,C)$, and Theorem \ref{thm: PCE-identification-X} requires $\omega_g(Z,X)$. For   simplicity,  we compress the   weights  as  $ \omega_g(Z,V)$ and $ \pi_g(V)$ with a generic variable $V$.  We hence propose a  unified estimation procedure in full generality that is compatible with the conditions in Theorems  \ref{thm: PCE-identification-AC}--\ref{thm: PCE-identification-X}.
% To derive the   estimator for $\theta_{1,g}$, we find that Model   \ref{model: outcome-pce} correspond to the following moment constraints on the observed data distribution:  
% \begin{equation*} 
% \begin{array}{l}E(Y\mid Z=1,S=0,V)=\mu_{1,{\NT}}(V;\theta_{1,{\NT}}),\\E(Y\mid Z=1,S=1,V)= {\sum_{g={\AT},{\CO}} {{ \omega}_g(1,V)}\mu_{1,g}(V;\theta_{1,g})}/{\{{ \sum_{g={\AT},{\CO}} { \omega}_g(1,V)}\} } .\end{array}
% \end{equation*}  
% We can use GMM with some user-specified function  again to solve for $\widehat\theta_{1,g}$ through the above modelling constraints.  The estimation task for $\widehat\theta_{0,g}$ is familiar,  we omit it for simplicity. 
Finally, in view of \eqref{eq: the-expression-of-PCE}, we can obtain our proposed estimator  for the principal causal effect as follows:
$${\widehat\Delta}_g=\frac{\mathbb{P}_n\big\{\mu_{1,g}(X;{\widehat\theta}_{1,g}){\widehat\pi}_g(V)\big\}}{\mathbb{P}_n\big\{{\widehat\pi}_g(V)\big\}}-\frac{\mathbb{P}_n\big\{\mu_{0,g}(X;{\widehat\theta}_{0,g}){\widehat\pi}_g(V)\big\}}{\mathbb{P}_n\big\{{\widehat\pi}_g(V)\big\}}.$$  
Using  empirical process theories, one can show that the resulting estimator   $\widehat\Delta_{g}$  is consistent and asymptotically normally distributed. 
  
 	  \section{Simulation studies}
 	  \label{sec: simulation} 
We conduct simulation studies to investigate the finite sample performance of the proposed estimators in this section.  
We consider the following data-generating mechanism: 
\begin{itemize}[leftmargin=30pt] 
	\item[(a).] We generate covariates $(A,C)$ from $ \begin{array}{l}(A,C)^\T\sim N\left\{\left(\delta_a,\delta_c\right)^\T,\begin{pmatrix}\sigma_a^2&\rho_1\sigma_{a}  \sigma_{c} \\\rho_1\sigma_{a}  \sigma_{c} &\sigma_c^2\end{pmatrix}\right\}.\end{array}$
	%	\item $W=\mu_{ 0}+\mu_{a}A +\mu_{c}{h(C)}+\varepsilon_w$, where $\varepsilon_w\sim N(0,\sigma_w^2)$. 
%	\item $U=\iota_0+\iota_a A+\iota_c C+\iota_w W+\varepsilon_u$, where $\varepsilon_a\sim N(0,\sigma_u^2)$.
	\item[(b).]  We generate the binary treatment $Z$  from a Bernoulli distribution with $\operatorname{pr}(Z=1 \mid A,C)=\Phi\left(\beta_{0}+\beta_{a} A+\beta_{c} C\right)$. 
	\item[(c).] Given $(Z,A,C)^\T$, we generate $(U,W)$  from the following joint normal distribution
%$$(W,U)\mid Z,A,C\sim N\left\{\begin{pmatrix}\mu_0+\mu_zZ+\mu_aA^2+\mu_cC\\\iota_0+\iota_aZ+\iota_aA^2+\iota_cC\end{pmatrix},\begin{pmatrix}\sigma_w^2&\rho\sigma_w\sigma_u\\\rho\sigma_w\sigma_u&\sigma_u^2\end{pmatrix}\right\}$$
$$(U,W)^\T\mid Z,A,C \sim N\left\{\begin{pmatrix}\iota_0+\iota_zZ+\iota_aA+\iota_{c1}C+\iota_{c2}C^2\\\gamma_0+\gamma_zZ+\gamma_aA+\gamma_{c1}C+\gamma_{c2}C^2\end{pmatrix},\begin{pmatrix}\sigma_u^2&\rho_2\sigma_u\sigma_w\\\rho_2\sigma_u\sigma_w&\sigma_w^2\end{pmatrix}\right\}.$$ 
% where $h(C)$ is some non-linear function of $C$. 
To guarantee $W\indep (Z,A)\mid (U,C)$, we set
%	$$ \dfrac{- \iota_a \sigma_w \rho_2}{\sigma_u} + \gamma_a=0,\; \dfrac{- \iota_z \sigma_w \rho_2  }{\sigma_u} + \gamma_z=0.$$
$  \gamma_a={\iota_a \sigma_w \rho_2}/{\sigma_u}$ and $\gamma_z=\displaystyle  {\iota_z \sigma_w \rho_2}/{\sigma_u }.$ 
For simplicity, we assume that $E(U\mid Z,A,W,C)$ is linear in $C$ by setting
$\gamma_{c2}= \iota_{c2}{\sigma_w}/ { \sigma_u\rho_2}.$ 
%Therefore,  we have $U\mid Z,A,W,C\sim N\left\{m_u(Z,A,W,C),\Sigma_u^2\right\}$, where $m_u(Z,A,W,C)=\nu_0+\nu_zZ+\nu_aA+\nu_cC+\nu_wW$ and 
%$$\begin{array}{c}\nu_0=\left(\iota_0\sigma_w-\beta_0\sigma_u\rho\right)/\sigma_w,\;\nu_z=\left(\iota_z\sigma_w-\beta_z\sigma_u\rho_2\right)/\sigma_w,\;\nu_w=\sigma_u\rho_2/\sigma_w,\\\nu_c=\left(\iota_{c1}\sigma_w-\beta_{c1}\sigma_u\rho_2\right)/\sigma_w,\;\;\nu_a=\left(\iota_a\sigma_w-\beta_a\sigma_u\rho_2\right)/\sigma_w,\;\Sigma_u^2=\sigma_u^2(1-\rho_2^2).\end{array}$$
% Also, $W\mid Z,A,U,C\sim N\left\{m_w(Z,A,W,C),\Sigma_u^2\right\}$, where
%$m_w(Z,A,U,C)=\tau_0+\tau_uU+\tau_{c1}C+\tau_{c2}h(C)$  and%$$\begin{array}{l}E(U\mid A,W,C)\\\;\;\;\;\;\;\;=A\iota_a-\dfrac{A\beta_a\sigma_u\rho_2}{\sigma_w}+{h(C)}\iota_{c2}-\dfrac{{h(C)}\beta_{c2}\sigma_u\rho_2}{\sigma_w}+C\iota_{c1}-\dfrac{C\beta_{c1}\sigma_u\rho_2}{\sigma_w}+\dfrac{W\sigma_u\rho_2}{\sigma_w}+\iota_0-\dfrac{\beta_0\sigma_u\rho_2}{\sigma_w}\\\;\;\;\;\;\;\;=A\iota_a-\dfrac{A\beta_a\sigma_u\rho_2}{\sigma_w}+C\iota_{c1}-\dfrac{C\beta_{c1}\sigma_u\rho_2}{\sigma_w}+\dfrac{W\sigma_u\rho_2}{\sigma_w}+\iota_0-\dfrac{\beta_0\sigma_u\rho_2}{\sigma_w}\end{array}\\$$
%$$\begin{array}{c}\tau_0=\left(\beta_0\sigma_u-\iota_0\sigma_w\rho_2\right)/\sigma_u,\;\tau_{c1}=\left(\beta_{c1}\sigma_u-\iota_{c1}\sigma_w\rho_2\right)/\sigma_u,\\\tau_{c2}=\left(\beta_{c2}\sigma_u-\iota_{c2}\sigma_w\rho_2\right)/\sigma_u,\;\tau_u=\sigma_w\rho_2/\sigma_u,\;\Sigma_w^2=\sigma_w^2(1-\rho_2^2).\end{array}$$ 
	%{\\\blue verification the completeness of $f(W\mid Z=1,A,C)$}
	\item[(d).]
Define $  G^\dag=\zeta_{0}+\zeta_{w}W+\zeta_{u}U+\zeta_{c}C$, and we generate the  principal stratum $G$   from the following ordered probit model:
%$$G=\NT\cdot{\bf 1}(G^{\dag}+\varepsilon\leq 0)+\CO\cdot{\bf 1}\{0<G^{\dag}+\varepsilon\leq \exp({\zeta_1})\}+\AT\cdot{\bf 1}\{\exp({\zeta_1})<G^{\dag}+\varepsilon\},$$ 
%where $\varepsilon\sim N(0,1)$.  Therefore,
%$\mathrm{pr}(G={\NT}\mid U,A,W,C)=\operatorname\Phi\left(-G^{\dag}\right)$, $\mathrm{pr}(G={\CO}\mid U,A,W,C)=\operatorname\Phi\left\{\exp(\zeta_1)-G^{\dag}\right\}-\operatorname\Phi\left(-G^{\dag}\right)$ and $\mathrm{pr}(G={\AT}\mid U,A,W,C)=1-\operatorname\Phi\left\{\exp(\zeta_1)-G^{\dag}\right\}$.
 $$\begin{array}{l}\mathrm{pr}(G={\NT}\mid U,A,W,C)=\operatorname\Phi(-G^{\dag}),\\\mathrm{pr}(G={\CO}\mid U,A,W,C)=\operatorname\Phi\{\exp(\zeta_1)-G^{\dag}\}-\operatorname\Phi(-G^{\dag}),\\\mathrm{pr}(G={\AT}\mid U,A,W,C)= \operatorname\Phi\{G^{\dag}-\exp(\zeta_1)\}.\end{array}$$
	\item[(e).] The
 outcome  $Y$  is finally generated from   the following conditional normal distribution:
$$\begin{array}{c}Y\mid(Z=z,G=g,A,W,C)\sim N( \theta_{z,g,0}+ \theta_{ a}A + \theta_{ w}W+ \theta_{ c}C,\sigma_y^2) .\end{array} $$
\end{itemize} 
 The true values of parameters are set as follows:  
\begin{itemize}[leftmargin=30pt] 
\item[(a).] 
 $ \delta_a=0$, $\delta_c=0$, $\sigma_a=0.5$, $\rho_1=0.5$, $\sigma_c=0.5$.
\item[(b).] $\beta_0 = 0$, $\beta_a =1$, $\beta_c=1$.
 \item[(c).] 
   $ \iota_0=1$, $\iota_z=1$, $\iota_a=1.5$,   $\iota_{c1}=1.5$, $ \iota_{c2}=-0.75$,
   $ \gamma_0=1$,  $  \gamma_z=0.5$,  $ \gamma_a=0.75$,   $\gamma_{c1}=1.5$,  $\gamma_{c2}=-1.5$, $\sigma_u=0.5$,  $\rho_2=0.5$, $\sigma_w=0.5$.
%   , $h(C)=C^2$. 
   \item[(d).]    $\zeta_0=0.5$, $ {\zeta_1}=0$, $ \zeta_w =0.5 $, $ \zeta_c =1$.   Since $\zeta_u$ controls for magnitude of unobserved confounding, we consider 6 different values, i.e.,  $\zeta_u\in\{0, 0.1,0.2,0.3,0.4,0.5\}$.%\wl{Do we need to introduce $\eta$?}
%   ,  where $\eta$ takes values on $0, 0.1, 0.2, 0.3,0.4$ and $0.5$.  
  \item[(e).]
$\theta_{0,{\NT},0}=0$,  $\theta_{0,{\CO},0}=1$,  $\theta_{0,{\AT},0}=2$, 
$\theta_{1,{\NT},0}=2$,  $\theta_{1,{\CO},0}=3$,  $\theta_{1,{\AT},0}=4$, $\theta_c=1$,      $ \sigma_y=0.5$.  
 We consider   4 settings for $(\theta_{a},\theta_{w})$:  
 $ (0,0) $, $ (1,0) $,  $  (0,1) $ and $ (1,1) $, which corresponds to different identifying assumptions.  %so that independence assumption \ref{assumption: indep-of-A} is satisfied; \\
%Scenario  (2): $ (\theta_a,\theta_w)=(1,0) $ under linear model { XXX} or assumption \ref{assumption: indep-of-W}.\\
%Scenario  (3): $ (\theta_a,\theta_w)=(0,1) $ under linear model { XXX} or  assumption \ref{assumption: identification-of-SACE}(c).\\ 
%Scenario  (4): $ (\theta_a,\theta_w)=(1,1) $ so that linear model {XXX}.    
\end{itemize}

% \usepackage{multirow}
%We now make   some comments on the data generation mechanism of the process between treatment and intermediate variables. 
%  We  introduce a parameter $\eta$ to   capture the effect   from $U$ to $G$, different values of  $\eta$ would lead to different proportions of principal strata.  

  \begin{table}[t]
\caption{Simulation studies with bias ($\times 100$), standard error ($\times 100$)  and 95\% coverage probability ($\times 100$) for various settings and sample sizes. %\wl{Illustrate abbreviations Sd, Cp somewhere. Is it necessary to report case of $\zeta_u = 0.2$?}
}\label{tab: result-simulation-0.2}
\centering
\resizebox{0.98\columnwidth}{!}{% 
\begin{threeparttable}
\begin{tabular}{cccrrrrrrrrrrrr}
\toprule[1pt]   $n$& Case  &  \multicolumn{1}{c}{$(\theta_a,\theta_w)$} && \multicolumn{3}{c}{$\Delta_\AT$} &  & \multicolumn{3}{c}{$\Delta_\CO$} &  & \multicolumn{3}{c}{$\Delta_\NT$} \\\addlinespace[0.5mm]\midrule[1pt]\addlinespace[1mm]   &&&&&\multicolumn{9}{c}{$\zeta_u=0.2$}   \\\addlinespace[0.5mm]\cline{5-15} \addlinespace[1mm]\addlinespace[0.5mm]         \addlinespace[0.5mm]    &      &         &   &    \multicolumn{1}{c}{Bias}         &    \multicolumn{1}{c}{Sd}      & \multicolumn{1}{c}{CP}          &  & \multicolumn{1}{c}{Bias}         &    \multicolumn{1}{c}{Sd}      & \multicolumn{1}{c}{CP}      &  & \multicolumn{1}{c}{Bias}         &    \multicolumn{1}{c}{Sd}      & \multicolumn{1}{c}{CP}        \\\addlinespace[0.25mm]\addlinespace[0.25mm] $ 1000$ & (i)&  $(0,0)   $ & & $-$0.6     & 7.5    & 96.2   &    & 2.4      & 39.8   & 95.8   &    & $-$2.8     & 28.6   & 95.0     \\ 
                     &     (ii) & $(1,0)     $  &  & $-$0.5     & 8.0      & 95.8   &    & 0.1      & 46.3   & 96.4   &    & $-$0.7     & 21.9   & 95.8   \\
                &          (iii)   & $(0,1)   $   &  & $-$1.0       & 12.5   & 95.6   &    & 7.8      & 48.5   & 94.0     &    & $-$3.3     & 21.6   & 94.4   \\
                &           (iv)   &$ (1,1)   $   &  & $-$0.2     & 12.6   & 95.8   &    & 1.1      & 50.3   & 96.8   &    & $-$0.9     & 23.4   & 95.2   \\  \addlinespace[2mm]   &         &          &  & \multicolumn{1}{c}{Bias}         &    \multicolumn{1}{c}{Sd}      & \multicolumn{1}{c}{CP}          &  & \multicolumn{1}{c}{Bias}         &    \multicolumn{1}{c}{Sd}      & \multicolumn{1}{c}{CP}      &  & \multicolumn{1}{c}{Bias}         &    \multicolumn{1}{c}{Sd}      & \multicolumn{1}{c}{CP}       \\\addlinespace[0.25mm]
$5000$       
            &(i) &$ (0,0)     $                &  & $-$0.4     & 3.5    & 94.4   &    & 1.6      & 18.9   & 96.6   &    & 0.7      & 12.9   & 94.8   \\ 
         &(ii)   & $(1,0)    $                  &  & $-$0.4     & 3.8    & 94.6   &    & 1.0        & 21.9   & 96.4   &    & 0.3      & 9.6    & 95.8   \\
          &(iii)  & $(0,1)$                       &  & $-$1.2     & 5.3    & 95.8   &    & 4.9      & 20.8   & 96.0     &    & $-$0.1     & 9.5    & 95.6   \\
      &(iv)      & $(1,1) $                     &  & $-$1.0       & 5.3    & 96.4   &    & 3.7      & 21.8   & 96.0     &    & 0.3      & 10.1   & 96.6   \\ \addlinespace[0.5mm]\midrule[1pt] \addlinespace[1mm]     &&&&&\multicolumn{9}{c}{$\zeta_u=0.5$} \\\addlinespace[0.5mm]\cline{5-15} \addlinespace[1mm]\addlinespace[0.5mm]         \addlinespace[0.5mm]  &          &         &  & \multicolumn{1}{c}{Bias}         &    \multicolumn{1}{c}{Sd}      & \multicolumn{1}{c}{CP}          &  & \multicolumn{1}{c}{Bias}         &    \multicolumn{1}{c}{Sd}      & \multicolumn{1}{c}{CP}      &  & \multicolumn{1}{c}{Bias}         &    \multicolumn{1}{c}{Sd}      & \multicolumn{1}{c}{CP}     \\\addlinespace[0.25mm]
$1000$      
          &(i)  & $(0,0)    $                    && $-$0.5                     & 5.5                    & 95.2                   &    & 8.2                      & 44.8                   & 93.6                   &    & $-$6.6                     & 25.0                     & 94.4                   \\
          
    &  (ii)        & $(1,0)    $     &             & $-$0.5                     & 5.9                    & 95.8                   &    & 8.3                      & 51.0                     & 94.8                   &    & $-$3.2                     & 19.0                     & 94.8                   \\
    &(iii)         & $(0,1)  $      &                & $-$0.6                     & 6.8                    & 96.6                   &    & 11.1                     & 47.0                     & 93.8                   &    & $-$3.5                     & 20.9                   & 93.4                   \\
        &(iv)     & $(1,1)   $        &            & $-$0.5                     & 6.8                    & 96.4                   &    & 6.4                      & 51.0                     & 96.2                   &    & $-$2.2                     & 22.4                   & 95.4                   \\ \addlinespace[0.5mm]
            \addlinespace[1mm]    &                  &  & & \multicolumn{1}{c}{Bias}         &    \multicolumn{1}{c}{Sd}      & \multicolumn{1}{c}{CP}          &  & \multicolumn{1}{c}{Bias}         &    \multicolumn{1}{c}{Sd}      & \multicolumn{1}{c}{CP}      &  & \multicolumn{1}{c}{Bias}         &    \multicolumn{1}{c}{Sd}      & \multicolumn{1}{c}{CP}       \\\addlinespace[0.25mm]
$5000$  & (i)       & $(0,0)    $                 &  & $-$0.4                     & 2.6                    & 95.0                     &    & 5.1                      & 21.8                   & 94.4                   &    & $-$0.1                     & 11.8                   & 95.4                   \\
           &  (ii)      & $(1,0)  $                 &  & $-$0.4                     & 2.8                    & 94.6                   &    & 4.8                      & 24.6                   & 94.4                   &    & $-$0.2                     & 8.6                    & 95.6                   \\
          &   (iii)   & $(0,1)    $               &  & $-$0.7                     & 3.2                    & 95.2                   &    & 7.1                      & 21.6                   & 94.0                     &    & $-$0.4                     & 8.8                    & 94.4                   \\
      &       (iv)        & $(1,1) $            &  & $-$0.6                     & 3.2                    & 95.2                   &    & 6.0                        & 23.2                   & 94.2                   &    & 0.0                        & 9.3                    & 95.4                  \\  \addlinespace[0.5mm]\bottomrule[1pt] 
\end{tabular}
\begin{tablenotes} 
 \item Sd: empirical standard error.~ CP: 95\% coverage probability.
\end{tablenotes}
 \end{threeparttable} 
} 
\end{table} 

Under the above data generating mechanism, the  bridge function with the following form
 is compatible with Assumption~\ref{assumption:  bridge-fun} (see supplementary materials for details):
 $$  h(Z,W,C)=\Phi\left\{\alpha_0+\exp( \alpha_1)Z+\alpha_wW+\alpha_{c1}C+\alpha_{c2}C^2\right\}.$$    We 
 thus  model the bridge function in Model \ref{model: bridge-function} with this parametric  form and  specify all correct  parametric models      in   Model \ref{model: distribution-of-models}. 
 We estimate the bridge function by solving the estimating equation \eqref{eq: parameter-bridge-function} with the user-specified functions $B(Z,A,C)=\{1,A,Z,C,C^2\}^\T$. 
	{It is worth pointing out that our data-generating mechanism also satisfies the model assumptions in Proposition \ref{prop: idetification-of-principal-stratum},  and  we can consistently estimate the probabilities  $\omega_g(Z,X)$ and    $\pi_g(X)$   using the  method given in the supplementary material.  } %Specifically, 	to guarantee the linear independence condition  of Proposition  \ref{prop: idetification-of-principal-stratum}, we  employ a quadratic term $h(C)$ as a predictor  to  Model \ref{model: distribution-of-models}(b).}    
% {\red In the appendix, we also considered a more complicated multivariate probit model, which requires  three principal strata link with two  probit models.}
%To examine the performance of our proposed estimators, we consider six estimation scenarios corresponding to different settings of  $(\theta_a,\theta_w)$.  
We investigate the performance of the proposed estimators under various values of $(\theta_a, \theta_w)$, which represent different conditional independence conditions between $(A,W)$ and the outcome $Y$. {For the four different settings in (e), we consider estimation of  principal causal effects with four different correct  parametric forms   in Model \ref{model: outcome-pce}, respectively.} For example, the setting {$(\theta_a, \theta_w)=(0,0)$ implies that Assumption~\ref{assump: negative-control-indep} holds, and we specify the working model $\mu_{z,g}(X;\theta_{zg})=\theta_{z,g,0}+\theta_{ c}C$}; if $(\theta_a, \theta_w)=(1,1)$,  then  the outcome {can} be affected by all covariates{,  and we employ the linear model given in \eqref{eq: linear-model-2}}. For simplicity, we refer to these four different {estimation procedures} as cases (i)--(iv), respectively.

% by using different estimators compatible with the conditions of Theorem \ref{thm: PCE-identification-X}.  
% We refer to the estimation methods corresponding to Theorem \ref{thm: PCE-identification-X} (i)--(iv) in Section \ref{sec: estimation} as Case (i)--Case (iv), respectively.  For example, the parameters  $(\theta_a,\theta_w)=(0,0)$,  which satisfying the condition of Theirem \ref{thm: PCE-identification-X}(i),   {principal causal effects} in this setting can be estimated according to the Case (i).

For each value of $\zeta_u$, we consider sample size $n=1000$  and $n=5000$. 
Table  \ref{tab: result-simulation-0.2} reports the bias,   standard error and coverage probabilities of 95\% confidence intervals averaged across 500 replications with {$\zeta_u=0.2$ and $0.5$.} The corresponding results for other values of $\zeta_u$ are provided in the supplementary material. The results in all the settings are similar. 
 It can be found that our method {has} negligible biases {with} smaller variances as the sample size increases.  
 {Estimators} of $\Delta_{\AT}$ and $\Delta_{\NT}$ are more stable than that of $\Delta_{\CO}$. {This may be because}  the estimation of $\Delta_{\CO}$ requires solving the joint estimating equations~\eqref{eqn:ee-at} and~\eqref{eqn:ee-nt} rather than only one of them. {The} proposed estimators have coverage probabilities close to the nominal level in all scenarios. All these results  {demonstrate the  consistency  of   our proposed estimators.}

% is expected because the estimation of $\Delta_{\CO}$ is more complicated and it requires us to solve one more estimating equation, which results in a more variable estimator. 
% Finally, the proposed estimators have coverage close to the nominal level in all scenarios.  

% latex table generated in R 3.6.3 by xtable 1.8-4 package
% Tue Feb  8 20:20:19 2022
% latex table generated in R 3.6.3 by xtable 1.8-4 package
% Tue Feb  8 20:34:46 2022
% Please add the following required packages to your document preamble:
% \usepackage{multirow}
\section{Application to Return to Schooling} 
We illustrate our approach by 
% using  an observational study  with non-compliance. 
% We 
reanalyzing the dataset from the National Longitudinal Survey of Young Men \citep{card1993using,Tan2006Regression}. 
%The dataset from the U.S. National Longitudinal Survey of Young Men contains 3010 men with age between 14 and 24 in the year 1966. 
This cohort study includes 3,010 men who were aged $14$-$24$ when first interviewed in 1966, with follow-up surveys continued until 1976.  
 We are interested in estimating the causal effect of education on earnings, which might be confounded by unobserved preferences for {students'} abilities and family costs \citep{KEDAGNI2021}. 

The treatment $Z$ is an indicator of living near a four-year college. Following \citet{Tan2006Regression},
we choose the {educational} experience beyond high school as the intermediate variable $S$. The outcome $Y$ is the log wage in the year 1976, ranging from 4.6 to 7.8.  We consider the average parental education years as the negative control exposure $A$, because parents'  education years {are highly correlated with} whether their children have the chance to live close to a college. We use the intelligence quotient (IQ) scores as the 
negative control intermediate variable $W$,   because IQ {is related to} students' learning abilities{, and} students with higher IQ are more likely to enter college.   
The data set also {includes} the following covariates $C$: race, age, scores on the Knowledge of the World of Work 
% (KWW)
test, a categorical variable indicating whether children living with both parents, single mom, or {step-parents}, and several geographic variables summarizing living areas in the past.
% {some binary covariates  indicating   for whether a subject lived with both parents, single mom and step parents at age 14, and several  geographic variables  that summarizing the living areas in 1966. }%a categorical variable indicating living with both parents, single mom, or both parents, and variables summarizing the living areas in 1966.   
The missing covariates are imputed via the $k$-Nearest Neighbor algorithm with $k=10$ \citep{franzin2017bnstruct}.

Monotonicity is plausible because living near a college would make an individual more likely to receive higher education. Following \citet{jiang2020multiply},  we do not invoke the exclusion restriction assumption that living near a college can affect the earnings only through education. In fact, we can evaluate the validity of this assumption by applying the proposed approach in this paper.
% Our proposed estimation method can be used to evaluate this assumption. 
We employ similar model parameterizations
% of Models \ref{model: bridge-function}-\ref{model: outcome-pce}   considered 
as used
in simulation studies, and our analyses here are also conducted under the cases (i)--(iv) that represent different conditional independence assumptions for the outcome model.
% To guarantee the linear independence condition of Proposition  \ref{prop: idetification-of-principal-stratum}, we  add a  square term of  age as a predictor  to  the conditional expectation   $E(W\mid Z,A,C)$ in Model \ref{model: distribution-of-models}.
%Again we analyze this real data using the 6 scenarios considered in the simulation, specifically, we postulate a linear model for the outcome mean and probit models for the principal score.  

%  in Table \ref{tab: NLSY-res}
%  ,  wherein the 95\% confidence intervals of  $\Delta_g$ for the estimates are obtained by the nonparametric bootstrap method.  
\begin{table}[t]
\centering
\caption{Analysis of the National Longitudinal Survey of Young Men.}
\label{tab: NLSY-res}  
\resizebox{0.78\columnwidth}{!}{% 
\begin{tabular}{cccc}
  \hline  \addlinespace[0.5mm]
 Case&  \multicolumn{1}{c}{$~~~~~~~~~~~~\Delta_{\AT}$}   &  \multicolumn{1}{c}{$~~~~~~~~~~~~\Delta_{\CO}$}  &  \multicolumn{1}{c}{$~~~~~~~~~~~~\Delta_{\NT}$}  \\ 
 \addlinespace[0.25mm]\hline \addlinespace[1mm] 
 
 (i) & $~~~$0.07~ ($-$0.07, $~~~$0.81) & $-$0.68 ~($-$1.85, $-$0.25) & $-$0.86~ ($-$2.83, $-$0.32) \\\addlinespace[0.5mm] 
   
(ii) & $~~~$0.06~ ($-$0.09, $~~~$0.77) & $~~~$0.13 ~($-$0.45, $~~~$0.59) & $~~~$0.02~ ($-$0.56, $~~~$0.16)  \\\addlinespace[0.5mm] 
 (iii) & $-$0.18~ ($-$0.55, $~~~$0.80) & $~~~$0.87~ ($~~~$0.05, $~~~$1.93) & $-$0.07~ ($-$0.91, $~~~$0.23) \\\addlinespace[0.5mm] 
 (iv) & $-$0.18 ~($-$0.56, $~~~$0.78) & $~~~$0.85~ ($~~~$0.00, $~~~$1.88) & $-$0.07~ ($-$0.89, $~~~$0.26)  \\\addlinespace[0.5mm]   \addlinespace[0.25mm]\hline
\end{tabular} }
\end{table}

%In Table \ref{tab: NLSY-res}, we present all estimates of g, along with 95\% confidence intervals for the estimates obtained by the bootstrap method 
%Firstly, we find that the point estimates and 95\% confidence intervals are very similar for Cases (V) and (VI), but not for other cases. From Table \ref{table:simulation-setup}, it can be seen that none of the Cases (I)-(IV)   include the  negative control variable $W$ as a predictor.
Table \ref{tab: NLSY-res} shows the point estimates and their associated 95\% confidence intervals obtained via the nonparametric bootstrap method. We first observe that the results in cases (iii) and (iv) are very {close}. Compared with them, the corresponding results in cases (i) and (ii) are completely different. Because
the outcome model in cases (i) and (ii) do not include the proxy variable $W$ as a predictor,
 the empirical findings may indicate misspecifications of outcome models in these two
 cases. Thus, the results in cases (iii) and (iv), where  the IQ score $W$ is allowed  to directly affect the wage $Y$, are more credible. Based on these results,
% {We hence adopt the results in cases (iii) and (iv), where we allow the IQ score $W$ to directly affect the wage $Y$.} 
 we find %Based on the results from  cases (iii) and (iv), we find 
 that both the 95\% confidence intervals for  $\Delta_{{\AT}}$ and $\Delta_{{\NT}}$ cover zero,  
which implies no significant evidence of violating the exclusion restriction.  The estimate of $\Delta_{
	{\CO}}$ is positive and its corresponding confidence
interval does not cover zero. This implies that education has a significantly positive effect on earnings, which is consistent with previous analyses \citep{Tan2006Regression,jiang2020multiply,KEDAGNI2021}.

%This is consistent with the conclusion in Jiang et al. (2020) 405
%that education has a significantly positive effect on earnings.
%
%
%
%The point  estimate of $\Delta_{
%{\CO}}$   {\lss   suggests 
% that obtaining a college degree increases the average log wage of the compliers by about 0.85. The corresponding confidence interval of $\Delta_{
%{\CO}}$  does not {cover zero}, which is
%in agreement with
%with the previous findings that education has a significantly positive effect on earnings \citep{Tan2006Regression,jiang2020multiply}. This result may be used as policy support for reducing the cost of college education \citep{KEDAGNI2021}.}  % , which support the conclusion that individuals would  benefit from schooling  in previous studies  \citep{jiang2020multiply}. 

% Note that the confidence interval of $\Delta_{{\CO}}$ is the widest, which is consistent  with our simulation studies. However, even so,  the estimator $\Delta_{
% {\CO}}$ still yields positive and statistically significant estimates. 
% This implies that education has a positive effect on earnings, which support the conclusion that individuals would  benefit from schooling  in previous studies  \citep{jiang2020multiply}. 

\section{Discussion}
With the aid of {a pair of negative controls}, we have established identification and estimation of {principal causal effects}    when the treatment and principal strata are confounded by unmeasured variables. The availability of negative control variables is crucial for the proposed approach.
Although it is generally not possible to test the negative control assumptions via observed
data without {additional} assumptions, the existence of such  variables
is practically reasonable in the empirical example presented in this paper and similar
situations where two or more proxies  of unmeasured variables
may be available \citep{Miao2018Identifying,shi2020multiply,miao2020confounding,cui2020semiparametric}.

The proposed methods may be improved or extended in several directions.  First, we consider parametric methods to solve integral equations involved in our estimation procedure. One may also consider nonparametric estimation techniques to obtain the solutions \citep{Newey2003completeness,chen2012estimation,li2021identification}.
   Second, we relax the commonly-used ignorability assumption by allowing unmeasured confounders between the treatment and principal strata, and it is possible to further relax this assumption and consider the setting where {Assumption \ref{assumption: latent ignorability}(ii) fails}.
   Third, our identifiability results rely on the monotonicity assumption which may not hold in some real applications. In principle,  
   one can  conduct
sensitivity analysis to {assess the principal causal effects of violations of monotonicity assumption}
%assess how results would change if the monotonicity assumption
%were violated by some pre-specified amount 
   % where treatment may have negative side-effects on intermediate variable. 
% the approach we develop here may be extended to perform a sensitivity analysis on the monotonicity assumption
\citep{DingPrincipal}.  
 Finally,  it is also of interest to develop doubly robust estimators for the principal causal effects as provided by \citet{cui2020semiparametric} for average treatment effects.
%  we have restricted our attention to mitigate the unmeasured  unmeasured confounding adjustment via negative controls proposed by \cite{Miao2018Identifying}.  It is of great interest to consider the alternative identification results proposed by \citet{cui2020semiparametric}   and further consider robust estimation strategies for {principal causal effects} in combination with Theorem \ref{thm: PCE-identification-AC}. 
 The study of these issues is beyond the scope of this paper and we leave them as future research topics.

					\bibliographystyle{apalike}
					\bibliography{mybib}

% in the presence of unmeasured confounding leveraging a pair of negative controls. 
%  The theoretical results have demonstrated the twofold role of the negative controls:  
%  (I) identification of  proportions of principal strata with bridge function; (II) identification  of  principal causal effects with some conditional independence.   
% Compared to previous studies without unmeasured  confounding \citep{Ding:2011,Jiang2016, Wang2017iv}, in addition to using a  proxy  variable $W$ for the  principal stratum $G$,  our    identification strategies  of {principal causal effects}  also depend on an additional   proxy  variable $A$ for treatment assignment $Z$.   
% When the identification assumption is invalid, we also establish  identification of {principal causal effects} through some  semiparametric and parametric models to mitigate the confounding. 

%				
	\newpage

\appendix   %仅一个附录时用appendix,否则\appendices
{\centering \section*{Supplementary Material}}
 In the supplementary material, we provide proofs of theorems and claims in the main paper. We also provide additional details for the estimation and simulation studies.
\vspace{1mm}
\setcounter{table}{0}   %从0开始编号,显示出来表会A1开始编号
\setcounter{figure}{0}
%定义编号格式,在数字序号前加字符“A"
\renewcommand{\theproposition}{S\arabic{proposition}}
\renewcommand{\thetheorem}{S\arabic{theorem}}
\renewcommand{\theassumption}{S\arabic{assumption}}
\renewcommand{\thesection}{S\arabic{section}}
\renewcommand{\theequation}{S\arabic{equation}}
\renewcommand{\thelemma}{S\arabic{lemma}}
					  \section{Proofs of propositions and theorems}
\subsection{The proof  of expression \eqref{eq: the-expression-of-PCE}} 
\begin{proof} 
By the   Law of Iterated Expectation (LIE), we have
\begin{equation}
    \label{eq: simipilfy-Y}
    \begin{array}{lcl}E(Y_z\mid G=g)&=&E\left\{E(Y_z\mid G=g,X)\mid G=g\right\}\\&=&\int E(Y_z\mid G=g,X)f(x\mid G=g)\operatorname dx\\&=&{\int E(Y_z\mid G=g,X)\mathrm{pr}(G=g\mid X=x)f(x)\operatorname dx}/{\mathrm{pr}(G=g)}\\&=&{E\{E(Y_z\mid G=g,X)\pi_g(X)\}}/{E\{\pi_g(X)\}}.\end{array}
\end{equation}
Given Assumption \ref{assumption: latent ignorability}(ii), we have $$\begin{array}{rcl}E(Y_z\mid G=g,X)&=&E(Y_z\mid Z=z,G=g,X)=E(Y\mid Z=z,G=g,X)=\mu_{z,g}(X).\end{array}$$   Combining these two pieces, we have  
$$E(Y_z\mid G=g)={E\{\mu_{z,g}(X)\pi_g(X)\}}/{E\{\pi_g(X)\}}.$$
\end{proof}
\subsection{Lemmas}
We first prove {the first conclusion in} Theorem \ref{thm: PCE-identification-AC}, which is summarized as the following lemma.
\begin{lemma}\label{lem: negative-control} 
Suppose that Assumptions \ref{assumption: monotonicity}, \ref{assumption: latent ignorability}(i),  \ref{assumption: negative-control-variables} and \ref{assumption:  bridge-fun} hold. Then the {conditional probabilities} of principal strata 
 are  identified by 
\begin{equation*} 
 \begin{array}{lcl}\omega_{{\AT}}(Z,A,C)= E\{h(0, W,C)\mid Z,A,C\},\\
\omega_{{\NT}}(Z,A,C)= 1-E\{h(1, W,C)\mid  Z,A,C\},\\
\omega_{{\CO}}(Z,A,C)=1-\omega_{{\AT}}(Z,A,C)-\omega_{{\NT}}(Z,A,C).\\
 % \int_{-\infty}^{+\infty}h(z, W,C)f(w\mid a,c)dw
% {\blue \operatorname{pr}(S_t=1\mid Z=z ,A,C)=E\{h(t,W,C)\mid Z=z ,A,C\} .}
 \end{array}
\end{equation*}   
\end{lemma}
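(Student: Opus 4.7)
The plan is to reduce, under monotonicity, the three principal-stratum probabilities to probabilities of the potential intermediate outcomes $S_{0}$ and $S_{1}$, and then identify each of those via the bridge equation by iterated conditioning. Assumption~\ref{assumption: monotonicity} empties the defier stratum, so $\{G=\AT\}=\{S_{0}=1\}$, $\{G=\NT\}=\{S_{1}=0\}$, and $\{G=\CO\}$ is the complement. It therefore suffices to prove, for each $z\in\{0,1\}$, the identity
\[
\pr(S_{z}=1\mid Z,A,C)=E\{h(z,W,C)\mid Z,A,C\};
\]
the formulas for $\omega_{\AT}$ and $\omega_{\NT}$ are then immediate, and $\omega_{\CO}$ follows by complementarity.

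To establish this identity I would first work at the $(C,U)$-conditional level. Marginalizing $A$ and $W$ in Assumption~\ref{assumption: negative-control-variables} yields $Z\indep(S_{0},S_{1})\mid(C,U)$, so consistency $S=S_{Z}$ gives $\pr(S=1\mid Z=z,C,U)=\pr(S_{z}=1\mid C,U)$, and Assumption~\ref{assumption:  bridge-fun} then delivers $\pr(S_{z}=1\mid C,U)=E\{h(z,W,C)\mid C,U\}$. A second consequence of Assumption~\ref{assumption: negative-control-variables}, namely $(Z,A)\indep W\mid(C,U)$, gives $E\{h(z,W,C)\mid Z,A,C,U\}=E\{h(z,W,C)\mid C,U\}$.

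Next I would lift both sides to $(Z,A,C)$-conditioning through the tower property. For the left side,
\[
\pr(S_{z}=1\mid Z,A,C)=E\{\pr(S_{z}=1\mid C,U)\mid Z,A,C\}=E[E\{h(z,W,C)\mid C,U\}\mid Z,A,C],
\]
using the consequence $(Z,A)\indep S_{z}\mid(C,U)$ of Assumption~\ref{assumption: negative-control-variables} in the first equality; for the right side,
\[
E\{h(z,W,C)\mid Z,A,C\}=E[E\{h(z,W,C)\mid Z,A,C,U\}\mid Z,A,C]=E[E\{h(z,W,C)\mid C,U\}\mid Z,A,C].
\]
Matching the two right-hand sides yields the displayed identity for each $z$, from which the expressions for $\omega_{\AT}$ and $\omega_{\NT}$ follow by specialization to $z=0$ and $z=1$, and $\omega_{\CO}$ follows by complementarity.

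The main difficulty is not analytic but notational: one must track which conditional-independence statements are actually implied by Assumption~\ref{assumption: negative-control-variables}. The two crucial consequences are $(Z,A)\indep(S_{0},S_{1})\mid(C,U)$, needed to pass from $(Z,A,C,U)$-conditioning down to $(C,U)$-conditioning for $S_{z}$, and $(Z,A)\indep W\mid(C,U)$, needed to perform the analogous reduction inside $h(z,W,C)$. Assumption~\ref{assumption: latent ignorability}(i) is subsumed by the stronger Assumption~\ref{assumption: negative-control-variables} for this argument, so beyond two applications of the tower property no further technical work arises.
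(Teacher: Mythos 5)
Your proposal is correct and follows essentially the same route as the paper's proof: both reduce the three stratum probabilities to $\pr(S_z=1\mid Z,A,C)$ via monotonicity and then establish $\pr(S_z=1\mid Z,A,C)=E\{h(z,W,C)\mid Z,A,C\}$ by two applications of iterated expectation, using Assumption~\ref{assumption: negative-control-variables} to drop $(Z,A)$ from the conditioning for both $S_z$ and $W$, consistency, and the bridge equation. Your observation that Assumption~\ref{assumption: latent ignorability}(i) is subsumed by Assumption~\ref{assumption: negative-control-variables} here is also consistent with the paper, whose proof never invokes it separately.
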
  
 \begin{proof} 
 Given the equality $\pr(S=1\mid Z=z,C,U)=E\{ h(z, W,C) \mid Z=z,C,U)\}$, for any $Z=z^\prime$, we have that
%$$\pr(S=1\mid Z=z,A,C)=E\{ h(z, W,C) \mid Z=z,A,C)\} . $$
%$$\begin{array}{lcl}\operatorname{pr}(S=1\mid Z=z,A,C)&=&\operatorname{pr}(S_z=1\mid Z=z,A,C)\\&=&E\left\{E(S_z\mid Z=z,A,C,U)\mid Z=z,A,C\right\}\\&=&E\left\{E(S_z\mid Z=z,C,U)\mid Z=z,A,C\right\}\\&=&E\left\{E(S\mid Z=z,C,U)\mid Z=z,A,C\right\}\\&=&E\left[E\left\{h(z,W,C)\mid C,U\right\}\mid Z=z,A,C\right]\\&=&E\left[E\left\{h(z,W,C)\mid Z=z,A,C,U\right\}\mid Z=z,A,C\right]\\&=&E\left\{h(z,W,C)\mid Z=z,A,C\right\}\end{array}$$
%We show a stronger statement:
\begin{equation}
\label{eq: potential-bridge-result}
    \begin{array}{lcl}\operatorname{pr}(S_z=1\mid Z=z^{\prime},A,C) &=&E\left\{E(S_z\mid Z=z^{\prime},A,C,U)\mid Z=z^{\prime},A,C\right\}\\&=&E\left\{E(S_z\mid Z=z,C,U)\mid Z=z^{\prime},A,C\right\}\\&=&E\left\{E(S\mid Z=z,C,U)\mid Z=z^{\prime},A,C\right\}\\&=&E\left\{E\{h(z,W,C)\mid C,U\}\mid Z=z^{\prime},A,C\right\}\\&=&E\left\{E\{h(z,W,C)\mid Z=z^{\prime},A,C,U\}\mid Z=z^{\prime},A,C\right\}\\&=&E\{h(z,W,C)\mid Z=z^{\prime},A,C\}.\end{array}
\end{equation}%Then,
%$$\begin{array}{lcl}\operatorname{pr}(S_z=1\mid Z=z^\prime,A,C)&=&E\left\{E(S_z\mid Z=z^\prime,A,C,U)\mid Z=z^\prime,A,C\right\}\\&=&E\left\{E(S_z\mid Z=z^\prime,A,C,U)\mid Z=z^\prime,A,C\right\}\\&=&E\left\{E(S_z\mid Z=z,C,U)\mid Z=z^\prime,A,C\right\}\\&=&E\left\{E(S\mid Z=z,C,U)\mid Z=z^\prime,A,C\right\}\\&=&E\left\{E\{h(z,W,C)\mid C,U\}\mid Z=z^\prime,A,C\right\}\\&=&E\left\{E\{h(z,W,C)\mid Z=z^\prime,A,C,U\}\mid Z=z^\prime,A,C\right\}\\&=&E\{h(z,W,C)\mid Z=z^\prime,A,C\}\end{array}$$
 where the first and final  equalities are due to LIE, the second and fifth equalities are  due to  Assumption \ref{assumption: negative-control-variables}, the third equality is due to consistency, the forth equality is due to Assumption \ref{assumption:  bridge-fun}.

 Given monotonicity assumption \ref{assumption: monotonicity},  we have the equivalence of $\{G={\AT}\}$ and $\{S_0=1\}$ as well as  $\{G={\NT}\}$ and $\{S_1=0\}$, namely \begin{equation*}
 \begin{array}{lcl}\operatorname{pr}(G={\AT}\mid Z,A,C)= \operatorname{pr}(S_0=1\mid Z, A,C)=E\{h(0, W,C)\mid Z,A,C\},\\
\operatorname{pr}(G={\NT}\mid Z,A,C)=\operatorname{pr}(S_1=0\mid Z,A,C)= 1-E\{h(1, W,C)\mid Z,A,C\},\\
\operatorname{pr}(G={\CO}\mid Z,A,C)=  E\{h(1, W,C)\mid Z,A,C\}- E\{h(0, W,C)\mid Z,A,C\}.
 % \int_{-\infty}^{+\infty}h(z, W,C)f(w\mid a,c)dw
 \end{array}
\end{equation*} 
%$$\begin{array}{lcl}\operatorname{pr}(S=1\mid Z=z,A,C)&=&E\left[E\left\{I(S=1)\mid U,Z=z,A,C\right\}\mid A,C,Z=z\right]\\&=&E\left[E\left\{I(S=1)\mid U,Z=z,C\right\}\mid A,C,Z=z\right]\\&=&E\left[E\left\{h(z, W,C)\mid U,Z=z,C\right\}\mid A,C,Z=z\right]\\&=&E\left[E\left\{h(z, W,C)\mid U,Z=z,A,C\right\}\mid A,C,Z=z\right]\\&=&E\left\{h(z, W,C)\mid A,C,Z=z\right\}\end{array}$$ 
 \end{proof} 
 { \begin{lemma}
 \label{lem: complete}
 Suppose that Assumptions  \ref{assumption: latent ignorability}(i),  \ref{assumption: negative-control-variables} and \ref{assumption:  bridge-fun} hold. We also assume that for any $c,z$ and square-integrable function $g$,  $E\{g(U)\mid Z=z,A,C=c\}=0$ almost surely if and only if $g(U)=0$ almost surely. Then any function $h$  satisfying
\begin{equation}
    \label{eq: obs-bridge-fun}
    \pr(S=1\mid Z,A,C)=E\{h(Z,W,C)\mid Z,A,C\}
\end{equation}
is also a valid outcome bridge function in  Assumption \ref{assumption:  bridge-fun}.
 \end{lemma}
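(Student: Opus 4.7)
The plan is to prove the lemma by fixing an arbitrary $z$, taking the observed-data equation~\eqref{eq: obs-bridge-fun}, and using the tower law to introduce $U$ into the conditioning event on both sides. The completeness condition will then let us strip off the outer conditional expectation and recover the latent-variable bridge equation in Assumption~\ref{assumption:  bridge-fun}.

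More concretely, first I would fix $z$ and rewrite the left-hand side of~\eqref{eq: obs-bridge-fun} as
\begin{equation*}
\pr(S=1\mid Z=z,A,C)=E\{\pr(S=1\mid Z=z,A,C,U)\mid Z=z,A,C\}.
\end{equation*}
Assumption~\ref{assumption: negative-control-variables} implies $A\indep (S_{0},S_{1})\mid (Z,C,U)$, and together with consistency $S=ZS_{1}+(1-Z)S_{0}$ this gives $S\indep A\mid (Z,C,U)$, so $\pr(S=1\mid Z=z,A,C,U)=\pr(S=1\mid Z=z,C,U)$. Hence the left-hand side equals $E\{\pr(S=1\mid Z=z,C,U)\mid Z=z,A,C\}$. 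For the right-hand side, I would similarly use the tower law with $U$ and then invoke the consequence $W\indep (Z,A)\mid (C,U)$ of Assumption~\ref{assumption: negative-control-variables} to obtain
\begin{equation*}
E\{h(z,W,C)\mid Z=z,A,C\}=E\left[E\{h(z,W,C)\mid C,U\}\,\bigm|\,Z=z,A,C\right].
\end{equation*}

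Subtracting the two representations, \eqref{eq: obs-bridge-fun} becomes
\begin{equation*}
E\left[\pr(S=1\mid Z=z,C,U)-E\{h(z,W,C)\mid C,U\}\,\bigm|\,Z=z,A,C\right]=0
\end{equation*}
almost surely. For each fixed $c$, define $g(U)=\pr(S=1\mid Z=z,C=c,U)-E\{h(z,W,C)\mid C=c,U\}$, which is a function of $U$ alone satisfying $E\{g(U)\mid Z=z,A,C=c\}=0$ almost surely. The stated completeness condition then forces $g(U)=0$ almost surely, which is exactly $\pr(S=1\mid Z=z,C,U)=E\{h(z,W,C)\mid C,U\}$. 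Since $z$ was arbitrary, $h$ satisfies Assumption~\ref{assumption:  bridge-fun}.

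The main obstacle is the bookkeeping of conditional independences: one must verify that Assumption~\ref{assumption: negative-control-variables} does in fact yield both $S\indep A\mid (Z,C,U)$ and $W\indep (Z,A)\mid (C,U)$, so that $A$ can be simultaneously removed from the inner conditioning on both sides before applying completeness. Once those two consequences are extracted, the remainder is a routine tower-law argument and a direct appeal to the completeness assumption.
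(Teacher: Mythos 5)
Your proposal is correct and follows essentially the same route as the paper's proof: apply the tower law to condition on $U$, use Assumption~\ref{assumption: negative-control-variables} to remove $A$ (and $Z$ where appropriate) from the inner conditioning, and then invoke the completeness condition to strip the outer expectation. The only difference is presentational — the paper treats $S-h(z,W,c)$ as a single quantity while you handle the two sides separately and spell out the intermediate conditional independences ($S\indep A\mid Z,C,U$ and $W\indep(Z,A)\mid C,U$) more explicitly, which is a slightly more careful write-up of the same argument.
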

 \begin{proof} 
 {Given \eqref{eq: obs-bridge-fun}, for any $c,z$, we have that \begin{align*}
0&=E\{S-h(z,W,c)\mid Z=z,A,C=c\}\\&=E\left[E\{S-h(z,W,c)\mid Z=z,A,C=c,U\}\mid Z=z,A,C=c\right]  \\&=E\left[E\{S-h(z,W,c)\mid Z=z, C=c,U\}\mid Z=z,A,C=c\right],  
 \end{align*}
 where the second equality is due to the LIE, and the last equality is due to Assumption \ref{assumption: negative-control-variables}.}
Given the completeness condition, we have that 
$$E\{S-h(z,W,c)\mid Z=z,C=c,U\}=0,$$
which indicates that  any function $h(Z,W,C)$ that solves equation \eqref{eq: obs-bridge-fun} also satisfies Assumption \ref{assumption:  bridge-fun}.
 \end{proof}
 }
  
\subsection{The proof of Theorems  \ref{thm: PCE-identification-AC}-\ref{thm: linear-model-AC}} 
 We next prove {the second conclusion in Theorem  \ref{thm: PCE-identification-AC}} and Theorem  \ref{thm: linear-model-AC}.
	\begin{proof}  
		Given the conditions in Lemma~\ref{lem: negative-control}, we know that the weights $\omega_g(Z,A,C)$ and $\eta_g(Z,A,C)$ are identifiable for all $g$.  
 Under monotonicity assumption, the causal estimands  {$\mu_{1,\NT}( X)$ and $\mu_{0,\AT}(X)$} can be identified by
\begin{equation*}
\mu_{0,\AT}( X)=E(Y \mid Z=0, S=1, X),\;\mu_{1,\NT}( X)=E(Y \mid Z=1, S=0, X).
\end{equation*} 
We next show that $\mu_{1,\AT}(X)$ and $\mu_{1,\CO}(X)$  are also identifiable. {For simplicity}, we omit the proof of $\mu_{0,\NT}(X)$ and $\mu_{0,\CO}(X)$. 
Applying LIE to get
\begin{equation} 
\label{eq: LIE-form}\begin{array}{l}E(Y\mid Z=1,S=1,A,C) =\eta_{{\AT}}(1,A,C)\mu_{1,{\AT}}(A,C)+\eta_{{\CO}}(1,A,C)\mu_{1,{\CO}}(A,C).\end{array}
\end{equation}
\begin{enumerate}
    \item Given additional Assumptions \ref{assumption: latent ignorability}(ii) and \ref{assump: negative-control-indep}, we have 
$$\mu_{1,g}(A,C)=E(Y\mid Z=1,G=g,A,C)=E(Y\mid Z=1,G=g,C)=\mu_{1,g}(C).$$
Therefore, \eqref{eq: LIE-form} can be simplified as $$\begin{array}{l}E(Y\mid Z=1,S=1,A,C)={\eta_{{\AT}}(1,A,C)\mu_{1,\AT}(C)+\eta_{{\CO}}(1,A,C)\mu_{1,\CO}(C)} .\end{array}$$
For any $C=c$,   if $\{\eta_{{\AT}}(1,A,c),\eta_{{\CO}}(1,A,c)\}^\T$ is linearly independent, we can identify $\mu_{1,\AT}( C)$ and $\mu_{1,\CO}( C)$. Thus, $\mu_{1,\AT}( A,C)$ and $\mu_{1,\CO}(A, C)$ are also identifiable.
    \item Given the conditions in Theorem \ref{thm: linear-model-AC}, \eqref{eq: LIE-form} can be simplified as 
    $$\begin{array}{l}E(Y\mid Z=1,S=1,A,C) ={\theta_{1,\AT,0}\eta_{{\AT}}(1,A,C)+\theta_{1,\CO,0}\eta_{{\CO}}(1,A,C)} +\theta_aA+\theta_cC,\end{array}$$
   If the functions   
$
   \left\{ \eta_{ {\AT}}(1,A,C),\eta_{ {\CO}}(1,A,C),A,C\right\}^{\mathrm{T}}$   
is linearly independent, we then can identify $\theta_{1,\CO}$, $\theta_{1,\AT}$, $\theta_a$ and $\theta_c$. Thus, $\mu_{1,\AT}(A ,C)$ and $\mu_{1,\CO}(A,C)$ are also identifiable.
\end{enumerate}
Given the identifiability of  $\mu_{1,\AT}(X)$ and $\mu_{1,\CO}(X)$, we show that $\mu_{1,\AT}$ and $\mu_{1,\CO}$ can be identified from the view   of  \eqref{eq: simipilfy-Y}. {Specifically}, we have 
$$  \begin{array}{lcl}E(Y_1\mid G=g)&=&E\left\{E(Y_1\mid G=g,A,C)\mid G=g\right\}\\&=&\int E(Y_1\mid G=g,A,C)f(a,c\mid G=g)\operatorname dx\\&=&{\int E(Y_1\mid G=g,a,c)\mathrm{pr}(G=g\mid A=a,C=c)f(a,c)\operatorname da\operatorname dc}/{\mathrm{pr}(G=g)}\\&=&{E\{E(Y_1\mid G=g,A,C)\pi_g(A,C)\}}/{E\{\pi_g(A,C)\}} \end{array}$$
for $ g\in\{\AT,\CO\}.$   Moreover, given Assumptions \ref{assumption: latent ignorability}(ii) and {\ref{assump: negative-control-indep} or \ref{assumption: indep-of-W}}, we have $$   E(Y_1\mid  G=g,A,C)=E(Y\mid Z=1,G=g,A,C)=\mu_{1,g}(A,C) $$ 
Therefore,   for $g\in\{\AT,\CO\}$, we have
\begin{equation*}
    \label{eq: expression-A-C}
    \mu_{1,g}=\dfrac{E\{\mu_{1,g}(A,C)\pi_g(A,C)\}}{E\{\pi_g(A,C)\}}.
\end{equation*}
Similarly, $\mu_{0,\NT}$ and $\mu_{0,\CO}$ are also identifiable.
	\end{proof}

 \subsection{The identifiability of \eqref{eq: linear-model-2}}
 In this section, we {show} the  identifiability of \eqref{eq: linear-model-2}.
 \begin{proof}
 
By the   {LIE}, we have
\begin{equation*}
    \label{eq: LIE-linear-2}
\begin{aligned}E(Y\mid & Z=1,S=1,A,C)\\\;\;\;\;\;\;\;\;\;\;&=\eta_{{\AT}}(1,A,C)\mu_{1,{\AT}}(A,C)+\eta_{{\CO}}(1,A,C)\mu_{1,{\CO}}(X)\\\;\;\;\;\;\;\;\;\;\;&=\eta_{{\AT}}(1,A,C)E\{\mu_{1,{\AT}}(X)\mid Z=1,G={\AT},A,C\}\\\;\;\;\;\;\;\;\;\;\;\;\;\;\;\;&\;\;\;\;\;+\eta_{{\CO}}(1,A,C)E\{\mu_{1,{\CO}}(X)\mid Z=1,G={\CO},A,C\}\\\;\;\;\;\;\;\;\;\;\;&=\eta_{{\AT}}(1,A,C)\theta_{1,{\AT},0}+\eta_{{\CO}}(1,A,C)\theta_{1,{\CO},0}+\theta_cC+\theta_aA\\\;\;\;\;\;\;\;\;\;\;\;\;\;\;\;&\;\;\;\;\;+\theta_wE(W\mid Z=1,G={\AT},A,C)\eta_{{\AT}}(1,A,C)\\\;\;\;\;\;\;\;\;\;\;\;\;\;\;\;&\;\;\;\;\;+\theta_wE(W\mid Z=1,G={\CO},A,C)\eta_{{\CO}}(1,A,C)\\\;\;\;\;\;\;\;\;\;\;&=\eta_{{\AT}}(1,A,C)\theta_{1,{\AT},0}+\eta_{{\CO}}(1,A,C)\theta_{1,{\CO},0}+\theta_cC+\theta_aA\\\;\;\;\;\;\;\;\;\;\;\;\;\;\;\;&\;\;\;\;\;+\theta_wE(W\mid Z=1,S=1,A,C).~~~~~~~~~~~~~~~~~~~~~~~~~
\end{aligned}
\end{equation*}

 	Given the conditions in Lemma \ref{lem: negative-control}, we know that the proportions of principal strata $\eta_g(Z,A,C)$ for all $g$ are identifiable.  If the functions   
$$
   \left\{ \eta_{ {\AT}}(1,A,C),\eta_{ {\CO}}(1,A,C),A,C,E(W\mid Z=1,S=1,A,C)\right\}^{\mathrm{T}}$$   
is linearly independen, we then can identify $\theta_{1,\CO,0}$, $\theta_{1,\AT,0}$, $\theta_a$, $\theta_c$ and $\theta_w$. Subsequently, $\mu_{1,\AT}(X)$ and $\mu_{1,\CO}(X)$ are also identifiable.
Similarly, $\mu_{0,\NT}(X)$ and $\mu_{0,\CO}(X)$ are also identifiable. \end{proof}
 
\subsection{The proof of Proposition \ref{prop: idetification-of-principal-stratum}}
{ \begin{proof} 
First, we can  identify ${m}(Z,A,C)$ and $\sigma_w^2$ from the observed data based on normal distribution. Next, we consider the identifiablity of ${\psi_1}$, $\psi_0$, $\psi_a$, $\psi_c$ and $\psi_w$. 
Given equation \eqref{eq: ordered-probit-model} in main text, we have that
 $$\begin{array}{l}\mathrm{pr}(G={\NT}\mid Z,X)=\operatorname\Phi\left(-G^{\ast}\right),\\\mathrm{pr}(G={\CO}\mid Z,X)=\operatorname\Phi\left\{\exp(\psi_1)-G^{\ast}\right\}-\operatorname\Phi\left(-G^{\ast}\right),\\\mathrm{pr}(G={\AT}\mid Z,X)=\operatorname\Phi\left\{G^{\ast}-\exp(\psi_1)\right\},\end{array}$$
where $G^{\ast}=\psi_{0}+\psi_{z}Z+\psi_{w}W+\psi_{a}A+\psi_{c}C$. 
The above equalities indicate   \eqref{eq: equiv-weight} holds. According to Lemma \ref{lem: negative-control}, we can nonparametriclly identify  the  distribution $\pr(S_t=1\mid Z, A,C)$. Also,
\begin{equation*}
    \label{eq: proof-prop1}
    \begin{aligned} &\int\mathrm{pr}(S_t=1\mid Z,X)f(W\mid Z,A,C)\operatorname dW\\=&\int\Phi\left\{\psi_0+\exp({\psi_1})(t-1)+\psi_zZ+\psi_aA+\psi_cC+\psi_wW\right\}f(W\mid Z,A,C)\operatorname dW\\=&\Phi\left\{\dfrac{\psi_0+\exp({\psi_1})(t-1)+\psi_zZ+\psi_aA+\psi_cC+\psi_wE(W\mid Z,A,C)}{\sqrt{1+\psi_w^2\sigma_w^2}}\right\}\\=&\Phi\left\{\dfrac{\psi_0+\exp({\psi_1})(t-1)+\psi_zZ+\psi_aA+\psi_cC+\psi_w{m}(Z,A,C)}{\sqrt{1+\psi_w^2\sigma_w^2}}\right\}\\=&\mathrm{pr}(S_t=1\mid Z,A,C).
    \end{aligned}
\end{equation*}
Since $\{1,Z,A,C,E(W\mid Z, A,C)\}^\T$  are linearly independent,  then   we can use probit regression to identify all the parameters.
\end{proof}}
\subsection{The proof of Theorem \ref{thm: PCE-identification-X}}
\begin{proof}

Given the conditions in Proposition \ref{prop: idetification-of-principal-stratum}, we know that the proportions of principal strata $\omega_g(Z,X)$ for all $g$ are identifiable. Also, under Assumption \ref{assumption: monotonicity}, we know that the causal estimands 
$\mu_{0,\AT}(X)$ or $\mu_{0,\NT}( X)$  can be identified by
\begin{equation*}
\mu_{0,\AT}( X)=E(Y \mid Z=0, S=1, X),\;\mu_{1,\NT}( X)=E(Y \mid Z=1, S=0, X).
\end{equation*} 
We then show that $\mu_{1,\AT}(X)$ and $\mu_{1,\CO}(X)$  are also identifiable. We omit the proof of $\mu_{0,\NT}(X)$ and $\mu_{0,\CO}(X)$ due to the similarity. 
Applying {LIE} to get
\begin{equation} 
\label{eq: LIE-form-2}
\begin{array}{l}E(Y\mid Z=1,S=1,X) ={\eta_{{\AT}}(1,X)\mu_{1,\AT}(X)+\eta_{{\CO}}(1,X)\mu_{1,\CO}(X)} .\end{array}
\end{equation}

\begin{enumerate}
    \item Given the conditions in Theorem \ref{thm: PCE-identification-X}(i), we have 
$$\mu_{1,g}(X)=E(Y\mid Z=1,G=g,X)=E(Y\mid Z=1,G=g,C)=\mu_{1,g}(C).$$
Therefore, equation \eqref{eq: LIE-form-2} can be simplified as $$\begin{aligned}E(Y\mid Z=1,S=1,X)&={\eta_{{\AT}}(1,X)\mu_{1,\AT}(C)+\eta_{{\CO}}(1,X)\mu_{1,\CO}( C)}. \end{aligned}$$
For any $C=c$, the vector function $\{\eta_{{\AT}}(1,A,W,c),\eta_{{\CO}}(1,A,W,c)\}^\T$ is linearly independent, we then can identify $\mu_{1,{\AT} }(C)$ and $\mu_{1,\CO}(C)$. Thus, $\mu_{1,\AT}(X)$ and $ \mu_{1,\CO}(X)$ are also identifiable with additional Assumption \ref{assump: negative-control-indep}.
\item Given the conditions in Theorem \ref{thm: PCE-identification-X}(ii), we have 
$$\mu_{1,g}(X)=E(Y\mid Z=1,G=g,X)=E(Y\mid Z=1,G=g,A,C)=\mu_{1,g}(A,C).$$
Therefore, equation \eqref{eq: LIE-form-2} can be simplified as $$\begin{aligned}E(Y\mid Z=1,S=1,X)&={\eta_{{\AT}}(1,X)\mu_{1,\AT}(A,C)+\eta_{{\CO}}(1,X)\mu_{1,\CO}( A,C)} .\end{aligned}$$
 Given any $(A,C)=(a,c)$, the function  $ \{ \eta_{{\AT}}(1,a,W,c),\eta_{{\CO}}(1,a,W,c)\}^\T$
	    are linearly independent, we then can identify $\mu_{1,\AT}(A,C)$ and $ \mu_{1,\CO}(A,C)$. Thus, $\mu_{1,\AT}(X)$ and $\mu_{1,\CO}(X)$ are also identifiable with additional Assumption \ref{assumption: indep-of-W}.
\item Given the conditions in Theorem \ref{thm: PCE-identification-X}(iii), we have 
$$\mu_{1,g}(X)=E(Y\mid Z=1,G=g,X)=E(Y\mid Z=1,G=g,W,C)=\mu_{1,g}(W,C).$$
Therefore, equation \eqref{eq: LIE-form-2} can be simplified as  $$\begin{aligned}E(Y\mid Z=1,S=1,X)&={\eta_{{\AT}}(1,X)\mu_{1,\AT}(W,C)+\eta_{{\CO}}(1,X)\mu_{1,\CO}(W,C)}.\end{aligned}$$
 Given any $(W,C)=(w,c)$, the function  $ \{\eta_{{\CO}}(1,A,w,c),\eta_{{\AT}}(1,A,w,c)\}^\T$
	    is linearly independent, we then can identify $\mu_{1,{\AT}}(W,C)$ and $\mu_{1,\CO}(W,C)$. Thus, $\mu_{1,\AT}(X)$ and $\mu_{1,\CO}(X)$ are also identifiable  with additional Assumption \ref{assumption: indep-of-A}.
    \item Given the conditions in  Theorem \ref{thm: PCE-identification-X}(iv), equation \eqref{eq: LIE-form-2} can be simplified as 
     $$\begin{array}{l}E(Y\mid Z=1,S=1,X)\\~~~~~~~={\theta_{1,\AT,0}\eta_{{\AT}}(1,X)+\theta_{1,\CO,0}\eta_{{\CO}}(1,X)}+\theta_aA+\theta_cC+\theta_wW,\end{array}$$
   if the function   $ \left\{ \eta_{ {\CO}}(1,X),\eta_{ {\AT}}(1,X),X\right\}^{\mathrm{T}}  
$ 
is linearly independent, we then can identify $\theta_{1,\CO,0}$, $\theta_{1,\AT,0}$, $\theta_a$, $\theta_w$  and $\theta_c$. Thus, $\mu_{1,\AT}(X)$ and $\mu_{1,\CO}(X)$ are also identifiable.
\end{enumerate}
Given the identifiability of  $\mu_{1,\AT}(X)$ and $\mu_{1,\CO}(X)$, we can  identify $\mu_{1,\AT}$ and $\mu_{1,\CO}$  from   \eqref{eq: simipilfy-Y}. Similarly, we can identify $\mu_{0,\NT}$ and $\mu_{0,\CO}$.

\end{proof} 
\section{Estimation details}
\subsection{Estimation details about Model \ref{model: bridge-function}} 
From \eqref{eq: potential-bridge-result}, we know that 
$$    \begin{array}{lcl}\operatorname{pr}(S_1=1\mid Z=z^{\prime},A,C) &=&E\{h(1,W,C)\mid Z=z^{\prime},A,C\}\\
 &\geq&E\{h(0,W,C)\mid Z=z^{\prime},A,C\}\\
 &=&\operatorname{pr}(S_0=1\mid Z=z^{\prime},A,C),\end{array}$$
which is compatible with the monotonicity assumption.
\subsection{Estimation details about Proposition \ref{prop: idetification-of-principal-stratum}}
If we  assume 
the conditions in Proposition \ref{prop: idetification-of-principal-stratum} hold, that is,  {the conditional distribution   $f(W\mid Z,A,C)$} in Model \ref{model: distribution-of-models} is normal distributed and  equation \eqref{eq: equiv-weight} holds,  
we can  further estimate the weights $\omega_g(Z,X)$ and $\pi_g( X)$.   In order to ensure the linearly independent condition  in Proposition \ref{prop: idetification-of-principal-stratum}, we suggest adding higher-order polynomial, square, or interaction terms  to  the conditional expectation    $m( Z,A,C)$ of the {conditional} distribution  $f(W\mid Z,A,C)$, especially in the case of linear regression.
After obtaining  $\widehat{\omega}_{g}(Z,A,C)$ as shown in the main text, there are some approaches to estimate the parameters in    equation \eqref{eq: equiv-weight}. For example, we can use  GMM again to solve  $\widehat{\psi}$ through  the  following    moment  constraints, 
 $$\begin{array}{rcl}~~~~\widehat{\omega}_{\AT}(Z,A,C)=E\{\mathrm{pr}\left(S_0=1\mid Z,X;\psi\right)\mid Z,A,C;\widehat\gamma \},~~~~~~~~~\\~~~~\widehat{\omega}_{\CO}(Z,A,C)+\widehat{\omega}_{\AT}(Z,A,C)=E\{\mathrm{pr}\left(S_1=1\mid Z,X;\psi\right)\mid Z,A,C;\widehat\gamma \},~~~~~~~~~\end{array}$$or  we can directly derive the specific form of the right hand of the above estimating equation,  and then  use Probit  regression to solve for $\widehat{\psi}$. Under monotonicity assumption \ref{assumption: monotonicity}, we can estimate $\omega_g(Z,X)$ by plugging $\widehat{\psi}$ into  equation \eqref{eq: equiv-weight}:
 \begin{equation*}
\label{eq:omega_g_X} \begin{array}{l}{\widehat\omega}_{{\AT}}(Z,X)=\mathrm{pr}(S_0=1\mid Z,X;\widehat\psi),\;{\widehat\omega}_{{\NT}}(Z,X)=1-\mathrm{pr}(S_1=1\mid Z,X;\widehat\psi),\\{\widehat\omega}_{{\CO}}(Z,X)=\mathrm{pr}(S_1=1\mid Z,X;\widehat\psi)-\mathrm{pr}(S_0=1\mid Z,X;\widehat\psi).\end{array}
\end{equation*}
We then find the estimate of ${\pi}_g(X)$ as follows:
\begin{equation*}
\label{eq:pi_g_X}
{\widehat\pi}_g(X)=\dfrac{\sum_{z=0}^1{\widehat\omega}_g(z,X)\mathrm{pr}(z\mid A,C;\widehat\beta)f(W\mid Z=z,A,C;\widehat\gamma)}{\sum_{z=0}^1\mathrm{pr}(z\mid A,C;\widehat\beta)f(W\mid Z=z,A,C;\widehat\gamma)}.
\end{equation*}
 \section{Simulation details}
\subsection{Simulation details about bridge function in Section \ref{sec: simulation}}
\begin{itemize}
    \item[1. ] We first present the specific form of the conditional density function  $f(W\mid U,C)$. 
Since  $W \mid Z,A,C,U \sim N\{q_w( Z,A,C,U),\Sigma_w^2\}$, where $\Sigma_w^2=\sigma_w^2(1-\rho^2_2)$ and the specific form of  $q_w(Z,A,C,U)$ is
%$$\mu_w(Z,A,C,U)=- \dfrac{Z \iota_z \sigma_w \rho}{\sigma_u} + Z \mu_z - \dfrac{A \iota_a \sigma_w \rho}{\sigma_u} + A \mu_a - \dfrac{C \iota_c \sigma_w \rho}{\sigma_u} + C \mu_c + \dfrac{U \sigma_w \rho}{\sigma_u} - \dfrac{\iota_0 \sigma_w \rho}{\sigma_u} + \mu_0$$ 
\begin{align*}
&~~~~~q_w(Z,A,C,U)=\tau_0+\tau_zZ+\tau_a A+\tau_{c1}C+\tau_{c2}C^2+\tau_uU,\\
  & \tau_0= \dfrac{\gamma_0\sigma_u-\iota_0\sigma_w\rho_2}{\sigma_u},~\tau_z= \frac{\gamma_z \sigma_u - \iota_z \sigma_w \rho_2}{\sigma_u}=0,~\tau_a= \frac{\gamma_a \sigma_u - \iota_a \sigma_w \rho_2}{\sigma_u}=0,\\&\tau_{c1}= \frac{\gamma_{c1} \sigma_u - \iota_{c1} \sigma_w \rho_2}{\sigma_u},\;\tau_{c2}= \frac{\gamma_{c2} \sigma_u - \iota_{c2} \sigma_w \rho_2}{\sigma_u},\;\tau_u=\displaystyle \frac{\sigma_w \rho_2}{\sigma_u}.
\end{align*} 
	\item[2.] We next present the specific form of the latent distribution $\mathrm{pr}(S_z=1\mid U,C)$ is as follows,
		\begin{align*}
		    &\operatorname{pr}(S=1\mid Z=1,U,C)\\&~~~~~~=\operatorname{pr}(S_1=1\mid U,C)\\&~~~~~~=E\{\operatorname{pr}(S_1=1\mid U,A,W,C)\mid U,C\}\\&~~~~~~=E\{\operatorname{pr}(S_1=1\mid U,W,C)\mid U,C\}\\&~~~~~~=E\{\Phi(\zeta_0+\zeta_wW+\zeta_uU+\zeta_cC)\mid U,C\}\\&~~~~~~=\operatorname\Phi\left\{\dfrac{\zeta_0+\zeta_wE(W\mid U,C)+\zeta_uU+\zeta_cC}{\sqrt{1+\zeta_w^2\Sigma_w^2}}\right\}\\&~~~~~~=\operatorname\Phi\left\{\dfrac{\zeta_0+\zeta_w(\tau_0+\tau_uU+\tau_{c1}C+\tau_{c2}{C^2})+\zeta_uU+\zeta_cC}{\sqrt{1+\zeta_w^2\Sigma_w^2}}\right\}\\&~~~~~~=\operatorname\Phi\left\{\dfrac{\zeta_0+\zeta_w\tau_0+\left(\zeta_u+\zeta_w\tau_u\right)U+\left(\zeta_c+\zeta_w\tau_{c1}\right)C+\zeta_w\tau_{c2}{C^2}}{\sqrt{1+\zeta_w^2\Sigma_w^2}}\right\}.
		    \end{align*} 
		    
		\begin{align*}
		    &\operatorname{pr}(S=1\mid Z=0,U,C)\\&~~~~~~=\operatorname{pr}(S_0=1\mid U,C)\\&~~~~~~=E\{\operatorname{pr}(G=\mathrm{AT}\mid U,A,W,C)\mid U,C\}\\&~~~~~~=E\{\operatorname{pr}(G=\mathrm{AT}\mid U,W,C)\mid U,C\}\\&~~~~~~=E\{\Phi(\zeta_0-\exp(\zeta_1)+\zeta_wW+\zeta_uU+\zeta_cC)\mid U,C\}\\&~~~~~~=\operatorname\Phi\left\{\dfrac{\zeta_0-\exp(\zeta_1)+\zeta_wE(W\mid U,C)+\zeta_uU+\zeta_cC}{\sqrt{1+\zeta_w^2\Sigma_w^2}}\right\}\\&~~~~~~=\operatorname\Phi\left\{\dfrac{\zeta_0+\zeta_w\tau_0-\exp(\zeta_1)+\left(\zeta_u+\zeta_w\tau_u\right)U+\left(\zeta_c+\zeta_w\tau_{c1}\right)C+\zeta_w\tau_{c2}{C^2}}{\sqrt{1+\zeta_w^2\Sigma_w^2}}\right\}.	\end{align*} 
		And thus, 
		\begin{equation}
		    \label{eq: density-s-UC} 
		    \begin{aligned}
		    \operatorname{pr}(S_t=1\mid U,C)&= \operatorname\Phi\left\{\dfrac{\zeta_0+\zeta_w\tau_0-\exp(\zeta_1)+\exp(\zeta_1)t+\left(\zeta_u+\zeta_w\tau_u\right)U}{\sqrt{1+\zeta_w^2\Sigma_w^2}}\right.\\&\;\;\;\;\;\;\;\;\;\;\;\;\;\;\;\;\;\;\;\;\;\;\;\;\;\;\;\left.+\dfrac{\left(\zeta_c+\zeta_w\tau_{c1}\right)C+\zeta_w\tau_{c2}{C^2}}{\sqrt{1+\zeta_w^2\Sigma_w^2}}\right\}.
		    \end{aligned}
		\end{equation}
		\item[3.]  We finally  verify  the  Assumption \ref{assumption:  bridge-fun}, that is, our  data generating mechanism is compatible with the condition $ 
\pr(S=1\mid Z=z,C,U)=E\{ h(z, W,C) \mid  C,U \}$ holds for all $z$, where
\begin{align*}
    h(t,W,C)=\Phi\left\{\alpha_0+\exp( \alpha_1)t+\alpha_wW+\alpha_{c1}C+\alpha_{c2}{C^2}\right\}.
\end{align*}
We verify as follows:
\begin{equation}
    \label{eq: birdge-uc}
 \begin{aligned}
     E\{&h(t,W,C)\mid U,C\}\\=&E\left[\Phi\left\{\alpha_0+\exp( \alpha_1)t+\alpha_wW+\alpha_{c1}C+\alpha_{c2}{C^2}\right\}\mid U,C\right]\\=&\operatorname\Phi\left\{\dfrac{\alpha_0+\exp( \alpha_1)t+\alpha_wE(W\mid U,C)+\alpha_{c1}C+\alpha_{c2}{C^2}}{\sqrt{1+\alpha_w^2\Sigma_w^2}}\right\}\\=&\operatorname\Phi\left\{\dfrac{\alpha_0+\alpha_w\tau_0+\exp( \alpha_1)t+\alpha_w\tau_uU+\left(\alpha_w\tau_{c1}+\alpha_{c1}\right)C+\left(\alpha_w\tau_{c2}+\alpha_{c2}\right){C^2}}{\sqrt{1+\alpha_w^2\Sigma_w^2}}\right\}.
 \end{aligned}
\end{equation}
Comparing \eqref{eq: density-s-UC} and \eqref{eq: birdge-uc}, we observe that $\mathrm{pr}(S_t=1\mid U,C) $ and $E\{h(t, W,C)\mid U,C\}$ have the same parametric form. 
% 	$$\begin{array}{lcl}\dfrac{\psi_0+\psi_w\tau_0}{\sqrt{1+\psi_w^2\Sigma_w^2}}&=&\dfrac{\alpha_0+\alpha_w\tau_0}{\sqrt{1+\alpha_w^2\Sigma_w^2}},\;\begin{array}{lcl}\dfrac{\psi_w\tau_u+\psi_u}{\sqrt{1+\psi_w^2\Sigma_w^2}}&=&\dfrac{\alpha_w\tau_u}{\sqrt{1+\alpha_w^2\Sigma_w^2}}.\end{array}\\\dfrac{\psi_w\tau_{c1}+\psi_c}{\sqrt{1+\psi_w^2\Sigma_w^2}}&=&\dfrac{\alpha_w\tau_{c1}+\alpha_{c1}}{\sqrt{1+\alpha_w^2\Sigma_w^2}},\;\begin{array}{lcl}\dfrac{\psi_w\tau_{c2}}{\sqrt{1+\psi_w^2\Sigma_w^2}}&=&\dfrac{\alpha_w\tau_{c2}+\alpha_{c2}}{\sqrt{1+\alpha_w^2\Sigma_w^2}}.\end{array}\end{array}$$  
 %	{\red $$\alpha_w=\displaystyle \dfrac{\zeta + 0.25}{- 0.1875 \zeta^{2} - 0.09375 \zeta + 0.25}$$}
\end{itemize}  
%	$$\begin{array}{lcl}\operatorname{pr}(G=\operatorname{AT}\mid U,C)&=&E\{\operatorname{pr}(G=\operatorname{AT}\mid U,A,W,C)\mid U,C\}\\&=&\int\operatorname{pr}(G={\AT}\mid U,A,W,C)f(W\mid U,C)dW\\&=&\int\operatorname{pr}\left(q_{{\CO}}<G^{\ast}+\varezetalon\mid U,A,W,C\right)f(W\mid U,C)dW\\&=&\int\operatorname{pr}\left(\psi_wW+\varepsilon>t_{CO}-\psi_0-\psi_uU-\psi_cC\right)f(\varepsilon,W\mid U,C)dWd\varepsilon\\&=&\int\operatorname{pr}\left\{\psi_wW+\varepsilon-\psi_wE(W\mid U,C)\right.\\&&\;\;\;\;\;\;\;\left.<t_{CO}-\psi_0-\psi_uU-\psi_cC-\psi_wE(W\mid U,C)\right\}f(\varepsilon,W\mid U,C)dWd\varepsilon\\&=&\operatorname\Phi\left\{\dfrac{q_{{\CO}}-\psi_0-\psi_wE(W\mid U,C)-\psi_uU-\psi_cC}{\sqrt{1+\psi_w^2\Sigma_w^2}}\right\}\\&=&\operatorname\Phi\left[\dfrac{q_{{\CO}}-\psi_0-\psi_w\left\{\tau_0+\tau_uU+\tau_{c1}C+\tau_{c2}{C^2}\right\}-\psi_uU-\psi_cC}{\sqrt{1+\psi_w^2\Sigma_w^2}}\right]\\&=&\operatorname\Phi\left\{-\dfrac{(\psi_0+\psi_w\tau_0-q_{{\CO}})+\left(\psi_u+\psi_w\tau_u\right)U+\left(\psi_c+\psi_w\tau_{c1}\right)C+\psi_w\tau_{c2}{C^2}}{\sqrt{1+\psi_w^2\Sigma_w^2}}\right\}\end{array}$$  
 \subsection{Simulation details about other models}
 \begin{enumerate}
     \item We  now present the specific form of  $f(U\mid Z,A,W,C)$.
Since  $U \mid Z,A,W,C \sim N\{q_u( Z,A,W,C),\Sigma_u^2\}$,  where $\Sigma_u^2=\sigma_u^2(1-\rho^2_2)$ and the specific form of  $q_w(Z,A,C,U)$ is 
\begin{align*}
&~~~~~q_w(Z,A,W,C)=\nu_0+\nu_zZ+\nu_a A+\nu_{c1}C+\nu_{c2}C^2+\nu_wW,\\
  & \nu_0= \dfrac{\iota_0\sigma_w-\gamma_0\sigma_u\rho_2}{\sigma_w},~\nu_z= \frac{\iota_z \sigma_w - \gamma_z \sigma_u \rho_2}{\sigma_w},~\nu_a= \frac{\iota_a \sigma_w - \gamma_a \sigma_u \rho_2}{\sigma_w},\\&\nu_{c1}= \frac{\iota_{c1} \sigma_w - \gamma_{c1} \sigma_u \rho_2}{\sigma_w},\;\nu_{c2}= \frac{\iota_{c2} \sigma_w - \gamma_{c2} \sigma_u \rho_2}{\sigma_w}=0,\;\nu_u=\displaystyle \frac{\sigma_u \rho_2}{\sigma_w}.
\end{align*} 
\item The specific form of $\pr(S_t\mid Z,A,W,C)$:
  \begin{align*}  
&\mathrm{pr}(S_1=1\mid Z,A,W,C)\\&~~~~~~=E\left\{\mathrm{pr}(S_1=1\mid Z,U,A,W,C)\mid Z,A,W,C\right\}\\&~~~~~~=E\left\{\mathrm{pr}(S_1=1\mid U,W,C)\mid Z,A,W,C\right\}\\&~~~~~~=E\left\{\Phi(\zeta_0+\zeta_wW+\zeta_uU+\zeta_cC)\mid Z,A,W,C\right\}\\&~~~~~~=\operatorname\Phi\left\{\frac{\zeta_0+\zeta_wW+\zeta_uE(U\mid Z,W,A,C)+\zeta_cC}{\sqrt{1+\zeta_u^2\Sigma_u^2}}\right\}\\&~~~~~~=\operatorname\Phi\left\{\frac{\zeta_0+\zeta_wW+\zeta_u(\nu_0+\nu_zZ+\nu_aA+\nu_wW)+\zeta_cC}{\sqrt{1+\zeta_u^2\Sigma_u^2}}\right\}\\&~~~~~~=\operatorname\Phi\left\{\frac{\zeta_0+\zeta_u\nu_0+\zeta_u\nu_zZ+\zeta_u\nu_aA+(\zeta_w+\zeta_u\nu_w)W+(\zeta_u\nu_c+\zeta_c)C}{\sqrt{1+\zeta_u^2\Sigma_u^2}}\right\}.\end{align*}

\begin{align*}  &\operatorname{pr}(S_0=1\mid Z,A,W,C)\\&~~~~~~=E\{\operatorname{pr}(G=\mathrm{AT}\mid U,Z,A,W,C)\mid Z,A,W,C\}\\&~~~~~~=E\{\operatorname{pr}(G=\mathrm{AT}\mid U,W,C)\mid Z,A,W,C\}\\&~~~~~~=E\{\Phi(\zeta_0-\exp(\zeta_1)+\zeta_wW+\zeta_uU+\zeta_cC)\mid Z,A,W,C\}\\&~~~~~~=\operatorname\Phi\left\{\frac{\zeta_0-\exp(\zeta_1)+\zeta_wW+\zeta_uE(U\mid Z,W,A,C)+\zeta_cC}{\sqrt{1+\zeta_u^2\Sigma_u^2}}\right\}\\&~~~~~~=\operatorname\Phi\left\{\frac{\zeta_0+\zeta_u\nu_0-\exp(\zeta_1)+\zeta_u\nu_zZ+\zeta_u\nu_aA+(\zeta_w+\zeta_u\nu_w)W+(\zeta_u\nu_c+\zeta_c)C}{\sqrt{1+\zeta_u^2\Sigma_u^2}}\right\}. \end{align*}
%$$\begin{array}{lcl}\mathrm{pr}(S_t=1\mid Z ,A,W,C)=\operatorname\Phi\left\{\dfrac{\zeta_0+\zeta_u\nu_0 -\exp(\zeta_1)+\exp(\zeta_1)t +\zeta_u\nu_zZ+\zeta_u\nu_aA+(\zeta_w+\zeta_u\nu_w)W+(\zeta_u\nu_c+\zeta_c)C}{\sqrt{1+\zeta_u^2\Sigma_u^2}}\right\}.\end{array}$${\blue 
{For simplicity, we let
 $\mathrm{pr}(S_t=1\mid Z,A,W,C)=\operatorname\Phi\{\psi_0+\exp({\psi_1})(t-1)+\psi_zZ+\psi_aA+\psi_wW+\psi_cC\} $,  
 where $$\begin{array}{c}\psi_0=\dfrac{\zeta_0+\zeta_u\nu_0 }{\sqrt{1+\zeta_u^2\Sigma_u^2}},\;\psi_1=\log\left\{\dfrac{\exp(\zeta_1)}{\sqrt{1+\zeta_u^2\Sigma_u^2}}\right\},\;\psi_z=\dfrac{\zeta_u\nu_z}{\sqrt{1+\zeta_u^2\Sigma_u^2}},\\\psi_a=\dfrac{\zeta_u\nu_a}{\sqrt{1+\zeta_u^2\Sigma_u^2}},\;\psi_w=\dfrac{\zeta_w+\zeta_u\nu_w}{\sqrt{1+\zeta_u^2\Sigma_u^2}},\;\psi_c=\dfrac{\zeta_u\nu_c+\zeta_c}{\sqrt{1+\zeta_u^2\Sigma_u^2}}.\end{array}$$}
 \item  The specific form of $\pr(S_t\mid Z,A,C)$,
%$$\begin{array}{lcl}\mathrm{pr}(S_t=1\mid Z,A,C)&=&E\left\{\mathrm{pr}(S_t=1\mid Z,A,W,C)\mid Z,A,C\right\}\\&=&E\left[\operatorname\Phi\{\psi_0+\psi_1(1-t)+\psi_zZ+\psi_aA+\psi_wW+\psi_cC\}\mid Z,A,C\right]\\&=&\operatorname\Phi\left\{\dfrac{\psi_0+\psi_1(1-t)+\psi_zZ+\psi_aA+\psi_wE(W\mid Z,A,C)+\psi_cC}{\sqrt{1+\psi_w^2\sigma_w^2}}\right\}.\end{array}$$
\begin{align*}
\mathrm{pr}(S_t=1&\mid Z,A,C)\\& =E\left\{\mathrm{pr}(S_t=1\mid Z,A,W,C)\mid Z,A,C\right\}\\& =E\left[\operatorname\Phi\left\{\psi_0+\exp(\psi_1)(t-1)+\psi_zZ+\psi_aA+\psi_wW+\psi_cC\right\}\mid Z,A,C\right]\\& =\operatorname\Phi\left\{\dfrac{\psi_0+\exp(\psi_1)(t-1)+\psi_zZ+\psi_aA+\psi_wE(W\mid Z,A,C)+\psi_cC}{\sqrt{1+\psi_w^2\sigma_w^2}}\right\}.
\end{align*}%
 \end{enumerate} 
\newpage
 \subsection{Additional simulation results}
 \begin{table}[h]
    \caption{Simulation studies with bias ($\times 100$), standard error ($\times 100$)  and 95\% coverage probability ($\times 100$) for various settings and sample sizes.  }\label{tab: result-simulation-0}
\centering
\resizebox{0.95\columnwidth}{!}{% 
\begin{tabular}{cccrrrrrrrrrrrr}
\toprule[1pt]   $n$ &  Case & \multicolumn{1}{c}{$(\theta_a,\theta_w)$} &  & \multicolumn{3}{c}{$\Delta_\AT$} &  & \multicolumn{3}{c}{$\Delta_\CO$} &  & \multicolumn{3}{c}{$\Delta_\NT$} \\\addlinespace[0.5mm]\midrule[1pt]\addlinespace[1mm]    &&&&&\multicolumn{9}{c}{$\zeta_u=0$}   \\\addlinespace[0.5mm] \cline{5-15} \addlinespace[1mm]\addlinespace[0.5mm]         \addlinespace[0.5mm]    &&                  &  & \multicolumn{1}{c}{Bias}         &    \multicolumn{1}{c}{Sd}      & \multicolumn{1}{c}{CP}          &  & \multicolumn{1}{c}{Bias}         &    \multicolumn{1}{c}{Sd}      & \multicolumn{1}{c}{CP}      &  & \multicolumn{1}{c}{Bias}         &    \multicolumn{1}{c}{Sd}      & \multicolumn{1}{c}{CP}        \\\addlinespace[0.25mm]
$1000$     
            & (i) & $(0,0)    $                 &  & $-$0.9     & 12.4   & 94.8   &    & 1.5      & 42.7   & 96.2   &    & $-$1.0       & 35.3   & 96.0     \\
          
         & (ii)       & $(1,0)     $                &  & $-$0.5     & 13.7   & 95.8   &    & $-$3.1     & 51.4   & 97.8   &    & 2.1      & 29.3   & 97.6   \\
         & (iii)       & $(0,1)     $                &  & $-$7.2     & 28.8   & 97.0     &    & 18.9     & 67.4   & 96.0     &    & $-$4.1     & 28.0     & 94.6   \\
          & (iv)      & $(1,1)$                     &  & $-$3.4     & 33.1   & 95.8   &    & 5.9      & 71.8   & 97.4   &    & 0.3      & 31.3   & 95.6   \\ \addlinespace[0.5mm]      \addlinespace[1mm]    &         &         &  & \multicolumn{1}{c}{Bias}         &    \multicolumn{1}{c}{Sd}      & \multicolumn{1}{c}{CP}          &  & \multicolumn{1}{c}{Bias}         &    \multicolumn{1}{c}{Sd}      & \multicolumn{1}{c}{CP}      &  & \multicolumn{1}{c}{Bias}         &    \multicolumn{1}{c}{Sd}      & \multicolumn{1}{c}{CP}       \\\addlinespace[0.25mm]
$5000$      
           & (i)     & $(0,0)     $                &  & $-$0.2     & 5.5    & 95.6   &    & $-$0.2     & 21.0     & 96.0     &    & 0.6      & 14.9   & 95.0     \\
            
           & (ii)     & $(1,0)  $                   &  & $-$0.1     & 6.0      & 95.4   &    & $-$1.1     & 24.1   & 96.0     &    & 0.5      & 11.5   & 95.4   \\
           & (iii)     & $(0,1)   $                  &  & $-$3.2     & 13.5   & 96.6   &    & 7.1      & 31.9   & 97.2   &    & $-$0.2     & 12.3   & 94.6   \\
            & (iv)    & $(1,1)  $                   &  & $-$2.1     & 13.9   & 96.2   &    & 4.0        & 32.2   & 96.2   &    & 0.6      & 12.6   & 95.0     \\  \addlinespace[0.5mm]\midrule[1pt] \addlinespace[1mm]    &&&&&\multicolumn{9}{c}{$\zeta_u=0.1$}   \\\addlinespace[0.5mm] \cline{5-15}  \addlinespace[0.5mm]  \addlinespace[1mm]    &           &       &  & \multicolumn{1}{c}{Bias}         &    \multicolumn{1}{c}{Sd}      & \multicolumn{1}{c}{CP}          &  & \multicolumn{1}{c}{Bias}         &    \multicolumn{1}{c}{Sd}      & \multicolumn{1}{c}{CP}      &  & \multicolumn{1}{c}{Bias}         &    \multicolumn{1}{c}{Sd}      & \multicolumn{1}{c}{CP}       \\\addlinespace[0.25mm] $1000$    
            & (i)     & $(0,0) $                 &  & $-$0.8     & 9.7    & 95.0      &    & 2.1      & 40.9   & 96.4   &    & $-$2.0        & 30.2   & 95.8   \\ 
            & (ii)    & $(1,0) $                   &  & $-$0.6     & 10.2   & 95.2   &    & $-$0.6     & 47.4   & 96.8   &    & 0.0         & 24.2   & 96.8   \\
            & (iii)       & $(0,1) $                &  & $-$2.6     & 17.9   & 95.4   &    & 10.9     & 54.1   & 95.2   &    & $-$3.2     & 22.9   & 94.8   \\
            & (iv)       & $(1,1) $                &  & $-$1.1     & 18.7   & 96.6   &    & 3.0         & 56.5   & 97.6   &    & $-$0.4     & 26.5   & 96.2   \\ \addlinespace[0.5mm]  \addlinespace[1mm]    &                  &  & &\multicolumn{1}{c}{Bias}         &    \multicolumn{1}{c}{Sd}      & \multicolumn{1}{c}{CP}          &  & \multicolumn{1}{c}{Bias}         &    \multicolumn{1}{c}{Sd}      & \multicolumn{1}{c}{CP}      &  & \multicolumn{1}{c}{Bias}         &    \multicolumn{1}{c}{Sd}      & \multicolumn{1}{c}{CP}       \\\addlinespace[0.25mm]
$5000$      
            & (i)    & $(0,0) $                   &  & $-$0.3     & 4.2    & 94.0     &    & 0.2      & 19.4   & 95.8   &    & 0.9      & 13.7   & 94.0      \\
            
            & (ii)       & $(1,0) $                &  & $-$0.2     & 4.7    & 95.2   &    & $-$0.6     & 22.5   & 95.6   &    & 0.6      & 9.9    & 96.6   \\
            & (iii)      & $(0,1) $                 &  & $-$1.7     & 7.8    & 96.0      &    & 4.8      & 24.0      & 95.4   &    & 0.1      & 10.3   & 95.8   \\
            & (iv)       & $(1,1) $                &  & $-$1.2     & 8.0      & 95.8   &    & 3.0        & 24.6   & 95.6   &    & 0.6      & 10.6   & 96.6   \\     \addlinespace[0.5mm]\bottomrule[1pt] 
\end{tabular}
} 
\end{table}

 \begin{table}[h]
    \caption{Simulation studies with bias ($\times 100$), standard error ($\times 100$)  and 95\% coverage probability ($\times 100$) for various settings and sample sizes.  }\label{tab: result-simulation-0.4}
\centering
\resizebox{0.95\columnwidth}{!}{% 
\begin{tabular}{cccrrrrrrrrrrrr}
\toprule[1pt]   $n$ &  Case & \multicolumn{1}{c}{$(\theta_a,\theta_w)$} &  & \multicolumn{3}{c}{$\Delta_\AT$} &  & \multicolumn{3}{c}{$\Delta_\CO$} &  & \multicolumn{3}{c}{$\Delta_\NT$} \\\addlinespace[0.5mm]\midrule[1pt]\addlinespace[1mm]    &&&&&\multicolumn{9}{c}{$\zeta_u=0.3$} \\\addlinespace[0.5mm] \cline{5-15} \addlinespace[1mm]\addlinespace[0.5mm]         \addlinespace[0.5mm]  &                  &  && \multicolumn{1}{c}{Bias}         &    \multicolumn{1}{c}{Sd}      & \multicolumn{1}{c}{CP}          &  & \multicolumn{1}{c}{Bias}         &    \multicolumn{1}{c}{Sd}      & \multicolumn{1}{c}{CP}      &  & \multicolumn{1}{c}{Bias}         &    \multicolumn{1}{c}{Sd}      & \multicolumn{1}{c}{CP}     \\\addlinespace[0.25mm]
$1000$     
            & (i)    & $(0,0) $                   &  & $-$0.6     & 6.5    & 95.0     &    & 3.5      & 41.4   & 95.8   &    & $-$3.8     & 26.6   & 93.8   \\
           
            & (ii)       & $(1,0) $                &  & $-$0.5     & 7.0      & 96.0     &    & 1.7      & 47.5   & 96.0     &    & $-$1.4     & 20.2   & 96.0     \\
            & (iii)          & $(0,1) $             &  & $-$0.6     & 9.6    & 96.2   &    & 7.5      & 45.5   & 95.8   &    & $-$2.8     & 20.9   & 94.2   \\
            & (iv)        & $(1,1) $               &  & $-$0.3     & 9.6    & 95.8   &    & 1.9      & 48.8   & 97.0     &    & $-$1.0       & 22.1   & 96.0     \\ \addlinespace[0.5mm]        \addlinespace[1mm]    &                  &  && \multicolumn{1}{c}{Bias}         &    \multicolumn{1}{c}{Sd}      & \multicolumn{1}{c}{CP}          &  & \multicolumn{1}{c}{Bias}         &    \multicolumn{1}{c}{Sd}      & \multicolumn{1}{c}{CP}      &  & \multicolumn{1}{c}{Bias}         &    \multicolumn{1}{c}{Sd}      & \multicolumn{1}{c}{CP}       \\\addlinespace[0.25mm]
$5000$      
            & (i)        & $(0,0) $               &  & $-$0.3     & 3.2    & 94.8   &    & 1.7      & 19.3   & 95.8   &    & 0.8      & 12.4   & 95.4   \\
          
            & (ii)       & $(1,0) $                &  & $-$0.3     & 3.5    & 93.6   &    & 1.2      & 22.2   & 96.0     &    & 0.2      & 8.8    & 95.4   \\
            & (iii)            & $(0,1) $           &  & $-$0.7     & 4.3    & 94.4   &    & 4.0        & 20.1   & 93.4   &    & 0.1      & 9.1    & 95.8   \\
            & (iv)     & $(1,1) $                  &  & $-$0.6     & 4.3    & 94.0     &    & 3.2      & 21.3   & 94.4   &    & 0.3      & 9.5    & 96.4 \\    \addlinespace[0.5mm]\midrule[1pt] \addlinespace[1mm]   &&&&&\multicolumn{9}{c}{$\zeta_u=0.4$} \\\addlinespace[0.5mm] \cline{5-15} \addlinespace[1mm]\addlinespace[0.5mm]         \addlinespace[0.5mm]  &  &                &  & \multicolumn{1}{c}{Bias}         &    \multicolumn{1}{c}{Sd}      & \multicolumn{1}{c}{CP}          &  & \multicolumn{1}{c}{Bias}         &    \multicolumn{1}{c}{Sd}      & \multicolumn{1}{c}{CP}      &  & \multicolumn{1}{c}{Bias}         &    \multicolumn{1}{c}{Sd}      & \multicolumn{1}{c}{CP}     \\\addlinespace[0.25mm]$1000$     
            & (i)       & $(0,0) $                &  & $-$0.5     & 6.0      & 95.0     &    & 4.6      & 41.4   & 94.8   &    & $-$5.2     & 25.5   & 94.8   \\
         
            & (ii)        & $(1,0) $               &  & $-$0.4     & 6.5    & 95.4   &    & 4.2      & 47.2   & 95.0     &    & $-$2.2     & 19.4   & 96.0    \\
            & (iii)        & $(0,1) $               &  & $-$0.5     & 8.0      & 97.0     &    & 8.2      & 44.7   & 95.0     &    & $-$3.2     & 20.2   & 94.2   \\
            & (iv)     & $(1,1) $                  &  & $-$0.4     & 8.0      & 96.6   &    & 2.9      & 47.9   & 96.8   &    & $-$1.4     & 21.8   & 96.4   \\ \addlinespace[0.5mm]        \addlinespace[1mm]    &          &        &  & \multicolumn{1}{c}{Bias}         &    \multicolumn{1}{c}{Sd}      & \multicolumn{1}{c}{CP}          &  & \multicolumn{1}{c}{Bias}         &    \multicolumn{1}{c}{Sd}      & \multicolumn{1}{c}{CP}      &  & \multicolumn{1}{c}{Bias}         &    \multicolumn{1}{c}{Sd}      & \multicolumn{1}{c}{CP}       \\\addlinespace[0.25mm]
$5000$     
            & (i)         & $(0,0) $              &  & $-$0.3     & 2.8    & 94.6   &    & 3.0        & 21.0     & 94.4   &    & 0.3      & 12.3   & 94.0     \\
           
            & (ii)      & $(1,0) $                 &  & $-$0.3     & 3.1    & 94.2   &    & 2.7      & 24.2   & 95.0     &    & 0.0        & 8.7    & 95.0     \\
            & (iii)      & $(0,1) $                 &  & $-$0.7     & 3.6    & 94.4   &    & 5.2      & 21.2   & 92.6   &    & $-$0.1     & 9.0      & 95.0     \\
            & (iv)        & $(1,1) $               &  & $-$0.6     & 3.7    & 94.6   &    & 4.0        & 22.4   & 93.6   &    & 0.3      & 9.5    & 95.4      \\  \addlinespace[0.5mm]\bottomrule[1pt] 
\end{tabular}
} 
\end{table} 					
				\end{document}